\newcommand{\FF}{\mathbb{F}}
\newcommand{\ZZ}{\mathbb{Z}}
\newcommand{\0}{\mathbf{0}}
\newcommand{\1}{\mathbf{1}}
\DeclareMathOperator{\wt}{wt}
\DeclareMathOperator{\Hull}{Hull}
\DeclareMathOperator{\rank}{rank}
\DeclarePairedDelimiter\floor{\lfloor}{\rfloor}
\theoremstyle{plain}
\newtheorem{thm}{Theorem}[section]
\newtheorem{lem}[thm]{Lemma}
\newtheorem{cor}[thm]{Corollary}
\newtheorem{prp}[thm]{Proposition}
\theoremstyle{definition}
\theoremstyle{remark}
\newtheorem{rem}[thm]{Remark}
\title{Construction for both self-dual codes and LCD codes}
\author{Keita Ishizuka\thanks{Corresponding author. Research Center for Pure and Applied Mathematics Graduate School of Information Sciences, Tohoku University, Sendai 980–8579, Japan. email: \texttt{keita.ishizuka.p5@dc.tohoku.ac.jp}}, Ken Saito\thanks{Research Center for Pure and Applied Mathematics Graduate School of Information Sciences, Tohoku University, Sendai 980–8579, Japan. email: \texttt{kensaito@ims.is.tohoku.ac.jp}}}
\date{}
\begin{document}
\maketitle

\begin{abstract}
    From a given $[n,k]$ code $C$, we give a method for constructing many $[n,k]$ codes $C'$ such that the hull dimensions of $C$ and $C'$ are identical.
    This method can be applied to constructions of both self-dual codes and linear complementary dual codes (LCD codes for short).
    Using the method, we construct $661$ new inequivalent extremal doubly even $[56,28,12]$ codes.
    Furthermore, constructing LCD codes by the method, we improve some of the previously known lower bounds on the largest minimum weights of binary LCD codes of length $n=26,28 \le n \le 40$.
    \\
    \textbf{Keywords.} Linear complementary dual code, Self-dual code, Doubly even code, Hull dimension.
    \\
    \textbf{2010 AMS Classification.} 94B05
\end{abstract}

\section{Introduction} \label{sec:introduction}
    Let $\FF_q$ be the finite field of order $q$, where $q$ is a prime power.
    An $[n,k]$ code $C$ over $\FF_q$ is said to be a self-dual code if $C = C^\perp$, where $C^\perp$ denotes the dual code of $C$.
    A code is said to be doubly even if all codewords have weights divisible by four.
    Mallows and Sloane~\cite{mallows1973anupper} proved that the minimum weight $d$ of a binary doubly even self-dual code of length $n$ is upper bounded by $d \le 4 \floor{n/24} + 4$.
    A binary doubly even self-dual code meeting the bound is called extremal.
    An $[n,k]$ code $C$ over $\FF_q$ is said to be an LCD code if $C \cap C^{\perp} = \{ \0_n \}$, where $\0_n$ denotes the zero vector of length $n$.
    The concept of LCD codes was invented by Massey~\cite{massey1992linear}.
    A binary LCD $[n,k]$ code is said to be optimal if it has the largest minimum weight among all binary LCD $[n,k]$ codes.

    Although the definitions say that self-dual codes and LCD codes are quite different classes of codes, codes of both classes have similar properties.
    For example, it is known that both self-dual codes and LCD codes are characterized by their generator matrices.
    Furthermore, self-dual codes are codes with maximal hull dimension and LCD codes are codes with minimal hull dimension, where the hull of a code $C$ is defined as $\Hull(C) = C \cap C^\perp$.
    Recently Harada~\cite{harada2021construction} gave a method for constructing LCD codes modifying known methods for self-dual codes in~\cite[Theorem 2.2]{harada1996existence} and~\cite[Theorem 2.2]{harada1995new} and constructed $15$ optimal binary LCD $[n,k]$ codes.

    In this paper, we give a method for constructing many $[n,k]$ code $C'$ from a given $[n,k]$ code $C$ such that $\dim(\Hull(C)) = \dim(\Hull(C'))$.
    This method can be applied to constructions of both self-dual codes and LCD codes.
    It is shown that the method is a generalized version of~\textup{\cite[Theorem 2.2]{harada1996existence}},~\textup{\cite[Theorem 3.3]{harada2021construction}} and~\textup{\cite[Theorem 2.2]{harada1995new}}.
    As an application, we construct $661$ new inequivalent extremal doubly even $[56,28,12]$ codes.
    Furthermore, constructing LCD codes by the method, we improve some of the previously known lower bounds on the largest minimum weights of binary LCD codes of length $n=26,28 \le n \le 40$.

    This paper is organized as follows:
    In Section~\ref{sec:preliminaries}, we recall some basic results on self-dual codes, LCD codes and hulls of codes.
    In Section~\ref{sec:constructMethod}, we provide the construction method.
    Furthermore, in Section~\ref{sec:basicProperties}, we state basic properties of the construction method.
    In Section~\ref{sec:SD}, we construct $661$ new inequivalent extremal doubly even $[56,28,12]$ from six bordered double circulant doubly even $[56,28,12]$ codes.
    In Section~\ref{sec:LCD}, we improve some of the largest minimum weights among all binary LCD $[n,k]$ codes with length $n=26,28 \le n \le 40$, which were recently studied by Bouyuklieva~\cite{bouyuklieva2020optimal} and Harada~\cite{harada2021construction}.
    All computations in this paper were performed in MAGMA~\cite{bosma1997magma}.

\section{Preliminaries} \label{sec:preliminaries}
Let $\FF_q$ be the finite field of order $q$, where $q$ is a prime power and let $\FF_q^n$ be the vector space of all $n$-tuples over $\FF_q$.
A $k$-dimensional subspace of $\FF_q^n$ is said to be an $[n,k]$ code over $\FF_q$. 
Especially, codes over $\FF_2$ are said to be binary codes.
Let $C$ be an $[n,k]$ code over $\FF_q$. 
The parameters $n$, $k$ are said to be the length, the dimension of $C$ respectively.
A vector in $C$ is said to be a codeword.
The weight of $x=(x_1,x_2,\dots,x_n) \in \FF_q^n$ is defined as $\wt(x) = \# \{ i \mid x_i \ne 0 \}$.
The minimum weight of $C$ is defined as $\wt(C) = \min \{ \wt(x) \mid x \in C, x \neq \0_n \}$.
If the minimum weight of $C$ equals to $d$, then $C$ is said to be an $[n, k, d]$ code over $\FF_q$.
A code $C$ is said to be an even code if all codewords have even weights.
Also, a code is said to be a doubly even code if all codewords have weights divisible by four.
Two $[n,k]$ codes $C_1,C_2$ over $\FF_q$ are equivalent if there exists a monomial matrix $M$ such that $C_2 = \{cM \mid c \in C_1 \}$.
The equivalence of two codes $C_1, C_2$ is denoted by $C_1 \simeq C_2$.
A generator matrix of a code $C$ is any matrix whose rows form a basis of $C$. 

The dual code $C^\perp$ of an $[n,k]$ code $C$ over $\FF_q$ is defined as $C^{\perp} = \{ x \in \FF_q^n \mid (x, y) = 0 \text{ for all } y \in C \}$, where $(x,y)$ is the standard inner product.
If $C \subset C^\perp$, then $C$ is said to be a self-orthogonal code.
If $C = C^\perp$, then $C$ is said to be a self-dual code.
A binary self-dual code $C$ is doubly even if and only if $n \equiv 0 \pmod 8$, where $n$ denotes the length of $C$.
Mallows and Sloane~\cite{mallows1973anupper} proved that the minimum weight $d$ of a binary doubly even self-dual code of length $n$ is upper bounded by $d \le 4 \floor{n/24} + 4$.
A binary doubly even self-dual code meeting the bound is called extremal.
\begin{lem}[{\cite[Theorem 1.4.8]{huffman2010fundamentals}}] \label{F2:lem:doublyevencode}
    Let $C$ be a binary code.
    Then the following holds:
    \begin{enumerate}
        \item If $C$ is a self-orthogonal code and has a generator matrix each of whose rows has weight divisible by four, then $C$ is doubly even.
        \item If $C$ is doubly even, then $C$ is a self-orthogonal code.
    \end{enumerate}
\end{lem}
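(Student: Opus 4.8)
The plan is to base everything on the elementary identity for binary vectors
\[
\wt(x+y) = \wt(x) + \wt(y) - 2\,\wt(x \ast y),
\]
where $x \ast y$ denotes the componentwise product of $x,y \in \FF_2^n$ (equivalently, the characteristic vector of the intersection of their supports), together with the observation that $\wt(x \ast y) \equiv (x,y) \pmod 2$. Both facts are immediate: the first by counting coordinates according to the pattern $(x_i,y_i) \in \{(1,0),(0,1),(1,1)\}$, and the second since $(x,y)=\sum_i x_iy_i$ reduces mod $2$ to the number of coordinates where both entries equal $1$.

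For (ii), the easier direction, take arbitrary $x,y \in C$. Since $C$ is doubly even, $\wt(x)$, $\wt(y)$ and $\wt(x+y)$ are all divisible by $4$, so the displayed identity forces $2\,\wt(x \ast y) \equiv 0 \pmod 4$; hence $\wt(x \ast y)$ is even, i.e.\ $(x,y)=0$ in $\FF_2$. As $x,y$ were arbitrary, $C \subseteq C^\perp$, so $C$ is self-orthogonal.

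For (i), fix a generator matrix $G$ with rows $r_1,\dots,r_k$, each of weight divisible by $4$. Every codeword has the form $c=\sum_{i \in S} r_i$ for some $S \subseteq \{1,\dots,k\}$, and I would argue by induction on $|S|$ that $\wt(c) \equiv 0 \pmod 4$. The cases $|S| \le 1$ are precisely the hypothesis on the rows (and the zero codeword). For the inductive step, pick $j \in S$ and write $c = c' + r_j$ with $c' = \sum_{i \in S \setminus \{j\}} r_i$; then
\[
\wt(c) = \wt(c') + \wt(r_j) - 2\,\wt(c' \ast r_j),
\]
where $\wt(c') \equiv 0 \pmod 4$ by the inductive hypothesis and $\wt(r_j) \equiv 0 \pmod 4$ by assumption. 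The remaining term is controlled by self-orthogonality: since $c', r_j \in C$ and $C \subseteq C^\perp$, we get $(c',r_j)=0$, so $\wt(c' \ast r_j)$ is even and $2\,\wt(c' \ast r_j) \equiv 0 \pmod 4$. Hence $\wt(c) \equiv 0 \pmod 4$, which closes the induction.

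The only real subtlety — and the reason neither hypothesis in (i) can be dropped — is this last point: without self-orthogonality the cross term $2\,\wt(c' \ast r_j)$ is only guaranteed to be even, not divisible by $4$, so the induction would merely recover that $C$ is an even code. I expect the base case and the bookkeeping in the induction to be entirely routine; all the content sits in the weight identity above and the mod-$2$ interpretation of the inner product.
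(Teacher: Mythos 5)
Your proof is correct. The paper itself gives no proof of this lemma --- it is quoted verbatim from Huffman and Pless \cite[Theorem 1.4.8]{huffman2010fundamentals} --- so there is nothing internal to compare against; your argument is the standard textbook one, resting on the identity $\wt(x+y)=\wt(x)+\wt(y)-2\wt(x*y)$ (which the paper separately records as Lemma~\ref{F2:lem:wt2}) together with $\wt(x*y)\equiv(x,y)\pmod 2$, and both the induction in (i) and the direct computation in (ii) are sound. Your remark about where self-orthogonality is genuinely needed in (i) is also accurate.
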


A pure double circulant code has a generator matrix of the form $\begin{pmatrix} I_k & R \\ \end{pmatrix}$ and a bordered double circulant code has a generator matrix of the form
\begin{equation} \label{eq:doubleCirculant}
    \begin{pmatrix}
        \begin{matrix}
        & & & \\
        & I_k & & \\
        & & &
        \end{matrix}
        \begin{matrix}
            0 & 1 & \ldots & 1 \\
            1 & & & \\
            \vdots &  & R &\\
            1 & & &
        \end{matrix}
    \end{pmatrix},
\end{equation}
where $I_k$ denotes the identity matrix of order $k$ and $R$ is a circulant matrix.
These two families of codes are collectively called double circulant codes.
Harada, Gulliver and Kaneta~\cite{harada1998classification} showed that there exist exactly nine inequivalent extremal double circulant doubly even $[56,28,12]$ codes and all of them are bordered double circulant codes.
In Section~\ref{sec:SD}, we construct extremal doubly even self-dual $[56,28,12]$ codes from six inequivalent extremal double circulant doubly even $[56,28,12]$ codes $D11, C_{56,1}, \dots, C_{56,5}$.
Generator matrices of $D11, C_{56,1}, \ldots, C_{56,5}$ are of the form~\eqref{eq:doubleCirculant} with first rows
\begin{align*}
  &(000101011011111000111111111), (000000000000110010101111011), \\
  &(000000001011011111110010111), (000000010011100111101110111), \\
  &(000000011011001001111101111), (000000101001111101011101011),
\end{align*}
respectively.

An $[n,k]$ code $C$ over $\FF_q$ is said to be an LCD code if $C \cap C^{\perp} = \{ \0_n \}$.
The concept of LCD codes was invented by Massey~\cite{massey1992linear}.
LCD codes have been applied in data storage, communication systems and cryptography.
For example, it is known that binary LCD codes can be used against side-channel attacks and fault injection attacks~\cite{carlet2016complementary}.
A binary LCD $[n,k]$ code is said to be optimal if it has the largest minimum weight among all binary LCD $[n,k]$ codes.
Massey~\cite{massey1992linear} gave the following characterization of LCD codes.
\begin{thm}[Massey~\cite{massey1992linear}]
    \label{thm:EucLCD}
    Let $C$ be an $[n,k]$ code over $\FF_q$ and let $G$ be a generator matrix of $C$. Then $C$ is an LCD code if and only if the $k \times k$ matrix $GG^T$ is nonsingular.
\end{thm}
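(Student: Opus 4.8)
The plan is to translate the condition $C \cap C^{\perp} = \{\0_n\}$ into a statement about the kernel of a linear map built from $G$, and then to recognise that map as (right) multiplication by $GG^T$.

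First I would fix the parametrisation of $C$ by its generator matrix: since the rows of $G$ form a basis of $C$, the map $\FF_q^k \to \FF_q^n$, $x \mapsto xG$, is injective with image exactly $C$. Next, observe that a vector $y \in \FF_q^n$ lies in $C^{\perp}$ if and only if it is orthogonal to every row of $G$, i.e.\ $yG^T = \0_k$; this is just the definition of $C^{\perp}$ rewritten using a spanning set of $C$. Combining these two facts, a codeword $xG$ lies in $C \cap C^{\perp}$ precisely when $(xG)G^T = x(GG^T) = \0_k$. Hence $\Hull(C) = \{\, xG : x \in \FF_q^k,\ x(GG^T) = \0_k \,\}$.

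Now the equivalence follows by a nullity argument. If $GG^T$ is nonsingular, then $x(GG^T) = \0_k$ forces $x = \0_k$, so the only element of $\Hull(C)$ is $\0_n$ and $C$ is an LCD code. Conversely, if $GG^T$ is singular, choose $x \neq \0_k$ with $x(GG^T) = \0_k$; since $x \mapsto xG$ is injective, $c := xG$ is a nonzero codeword, and by the description above $c \in \Hull(C)$, so $C$ is not an LCD code. This settles both directions.

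The argument is essentially routine linear algebra; the only point that needs care is the reduction ``orthogonal to $C$ iff orthogonal to every row of $G$'', together with keeping track that $x \mapsto xG$ is a genuine bijection onto $C$, so that a nonzero $x$ in the kernel of $GG^T$ really does produce a nonzero element of the hull. No deeper obstacle arises; in particular we do not need non-degeneracy of the form beyond the trivial identity $\dim C + \dim C^{\perp} = n$, which is not even invoked here.
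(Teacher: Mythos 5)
Your proof is correct and complete: the identification of $\Hull(C)$ with the image under $x \mapsto xG$ of the left null space of $GG^T$, together with the injectivity of $x \mapsto xG$, gives both directions cleanly. The paper states this theorem as a cited result of Massey and offers no proof of its own, so there is nothing to diverge from; your argument is the standard one, and it is exactly the computation that underlies the quantitative refinement the paper does use, namely Lemma~\ref{lem:rankGGT}, which records that $\rank(GG^T) = k - \dim(\Hull(C))$.
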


The hull of a code $C$ is defined as $\Hull(C) = C \cap C^\perp$.
By definition, it follows that self-dual codes are codes with maximal hull dimension and LCD codes are codes with minimal hull dimension.
\begin{lem}[{\cite[Proposition 3.1]{guenda2018constructions}}] \label{lem:rankGGT}
    Let $C$ be an $[n,k]$ code over $\FF_q$ with generator matrix $G$.
    Then
    \begin{equation*}
        \rank(GG^T) = k - \dim(\Hull(C)).
    \end{equation*}
\end{lem}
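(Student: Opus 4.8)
The plan is to exhibit $\Hull(C)$ as the isomorphic image of the null space of the $k \times k$ matrix $GG^T$ under the coordinate map $x \mapsto xG$, and then to invoke rank--nullity. So the whole statement reduces to a short piece of linear algebra once membership in $C^\perp$ is rewritten as a matrix equation.

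First I would record that, since the rows of $G$ are linearly independent, every codeword $c \in C$ is uniquely of the form $c = xG$ for some $x \in \FF_q^k$, and $x \mapsto xG$ is an isomorphism of vector spaces from $\FF_q^k$ onto $C$. Next I would translate orthogonality into a matrix condition: for $c = xG$ we have $c \in C^\perp$ if and only if $(c,r) = 0$ for every row $r$ of $G$, i.e.\ $Gc^T = \0$ as a column vector of length $k$; substituting $c = xG$ this becomes $GG^T x^T = \0$. Hence the codeword $c = xG$ lies in $\Hull(C) = C \cap C^\perp$ precisely when $x$ belongs to $N := \{ y \in \FF_q^k \mid GG^T y^T = \0 \}$, the (right) null space of $GG^T$.

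I would then finish as follows. Restricting the isomorphism $x \mapsto xG$ shows that it maps $N$ bijectively onto $\Hull(C)$, so $\dim(\Hull(C)) = \dim(N)$. Applying the rank--nullity theorem to the $k \times k$ matrix $GG^T$ gives $\dim(N) = k - \rank(GG^T)$. Combining these two equalities yields $\dim(\Hull(C)) = k - \rank(GG^T)$, which is the claimed identity after rearrangement.

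There is essentially no serious obstacle here; the only point requiring a little care is the bookkeeping of left versus right null spaces, which is harmless because $GG^T$ is symmetric, so its left and right null spaces have equal dimension. Conceptually, the identity is just the standard fact that the Gram matrix $GG^T$ of the standard bilinear form restricted to $C$, taken in the basis furnished by the rows of $G$, has rank equal to $\dim(C)$ minus the dimension of the radical of that restricted form, and the radical is exactly $C \cap C^\perp = \Hull(C)$; one may equally well organize the write-up around this observation, but the concrete computation above is self-contained and works over any $\FF_q$.
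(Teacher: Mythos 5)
Your argument is correct: translating $xG \in C^\perp$ into $GG^T x^T = \0$, identifying $\Hull(C)$ with the image of the null space of $GG^T$ under the isomorphism $x \mapsto xG$, and applying rank--nullity is exactly the standard proof of this fact. Note that the paper itself gives no proof here --- it quotes the statement from Guenda, Jitman and Gulliver --- so there is nothing to compare against; your write-up is the self-contained argument one would expect, and the aside about left versus right null spaces is unnecessary since you only ever use the right null space, to which rank--nullity applies directly.
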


\section{Construction method} \label{sec:constructMethod}
Let $C$ be an $[n,k]$ code over $\FF_q$ with generator matrix $\begin{pmatrix} I_k & A \\ \end{pmatrix}$ and let $x,y \in \FF_q^{n-k}$.
We denote by $r_i$ the $i$-th row of $A$.
Define an $n \times (n-k)$ matrix $A(x,y)$, where the $i$-th row $r'_i$ is defined as follows:
\begin{equation*}
    r'_i = r_i + (r_i,y)x - (r_i,x)y.
\end{equation*}
We denote by $C(A(x,y))$ the code with generator matrix $\begin{pmatrix} I_k & A(x,y) \\ \end{pmatrix}$.
\begin{rem}
    With the above notation, suppose that $x=\0_{n-k}$ or $y=\0_{n-k}$.
  Then it holds that $A(x,y)=A$.
  Hereafter, we assume that $x \neq \0_{n-k}$ and $y \neq \0_{n-k}$.
\end{rem}

\begin{thm} \label{F2:thm:LCP}
    Let $C$ be an $[n,k]$ code over $\FF_q$ with generator matrix $G = \begin{pmatrix} I_k & A \\ \end{pmatrix}$ and let $x,y \in \FF_q^{n-k}$.
    Suppose that $(x,x)=(y,y)=(x,y)=0$.
    Then $\dim(\Hull(C(A(x,y)))) = \dim(\Hull(C))$.
\end{thm}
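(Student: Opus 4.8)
The plan is to reduce everything to a statement about Gram matrices and then invoke Lemma~\ref{lem:rankGGT}. Write $G = \begin{pmatrix} I_k & A \end{pmatrix}$ and $G' = \begin{pmatrix} I_k & A(x,y) \end{pmatrix}$. By definition $G'$ is a generator matrix of $C(A(x,y))$, and both $G$ and $G'$ have rank $k$ thanks to the leading identity block, so $C$ and $C(A(x,y))$ are genuine $[n,k]$ codes. By Lemma~\ref{lem:rankGGT},
\[
\dim(\Hull(C)) = k - \rank(GG^T), \qquad \dim(\Hull(C(A(x,y)))) = k - \rank(G'G'^T),
\]
so it suffices to show $\rank(GG^T) = \rank(G'G'^T)$. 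In fact I expect the much stronger identity $G'G'^T = GG^T$, which would close the argument at once.

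To prove this, note $GG^T = I_k + AA^T$ and $G'G'^T = I_k + A(x,y)A(x,y)^T$, so it is enough to check that the rows $r'_i = r_i + (r_i,y)x - (r_i,x)y$ of $A(x,y)$ satisfy $(r'_i, r'_j) = (r_i, r_j)$ for all $i,j$. Put $a_i = (r_i,x)$ and $b_i = (r_i,y)$, and expand $(r'_i,r'_j)$ bilinearly into nine terms. The three terms that carry a factor $(x,x)$, $(y,y)$ or $(x,y)$ vanish by hypothesis. The surviving contributions are $(r_i,r_j)$ together with the four cross terms $b_j a_i$, $-a_j b_i$, $b_i a_j$, $-a_i b_j$; since scalars commute, these cancel in pairs, leaving $(r'_i, r'_j) = (r_i, r_j)$. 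Hence $A(x,y)A(x,y)^T = AA^T$, so $G'G'^T = GG^T$, and the theorem follows.

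The only thing requiring care is the bookkeeping of the nine terms in the expansion --- in particular noticing that the hypotheses $(x,x)=(y,y)=(x,y)=0$ are exactly what is needed to kill the ``diagonal'' terms, after which the remaining cross terms cancel automatically by commutativity (this works over any $\FF_q$, including characteristic $2$). Beyond that there is no real obstacle; the content of the theorem is the one-line Gram-matrix identity $G'G'^T = GG^T$, and the passage from there to hull dimensions is immediate from Lemma~\ref{lem:rankGGT}.
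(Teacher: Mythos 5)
Your proof is correct and follows essentially the same route as the paper's: both establish the Gram-matrix identity $(r'_i,r'_j)=(r_i,r_j)$ by bilinear expansion (using the hypotheses to kill the $(x,x)$, $(y,y)$, $(x,y)$ terms and commutativity to cancel the cross terms), conclude $G'G'^T=GG^T$, and finish with Lemma~\ref{lem:rankGGT}. No issues.
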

\begin{proof}
    We denote by $r_i, r'_i$ the $i$-th rows of $A,A(x,y)$ respectively.
    It holds that
    \begin{align*}
        (r_i',r_j')
        &=(r_i+(r_i,y)x-(r_i,x)y,\ r_j+(r_j,y)x-(r_j,x)y) \\
        &=(r_i,r_j) +(r_j,y)(r_i,x) -(r_j,x)(r_i,y) +(r_i,y)(x,r_j) -(r_i,x)(y,r_j) \\
        &=(r_i,r_j).
    \end{align*}
    Therefore it follows that
    \begin{equation*}
        \begin{split}
            \begin{pmatrix} I_k & A(x,y) \\ \end{pmatrix} \begin{pmatrix} I_k & A(x,y) \\ \end{pmatrix}^T
            &= I_k + A(x,y)A(x,y)^T\\
            &= I_k + AA^T\\
            &= \begin{pmatrix} I_k & A \\ \end{pmatrix} \begin{pmatrix} I_k & A \\ \end{pmatrix}^T.
        \end{split}
    \end{equation*}
    By Lemma~\ref{lem:rankGGT}, the result follows.
\end{proof}

\begin{cor} \label{F2:thm:SO}
    Let $C$ be an $[n,k]$ code over $\FF_q$ with generator matrix $G = \begin{pmatrix} I_k & A \\ \end{pmatrix}$ and let $x,y \in \FF_q^{n-k}$.
    Suppose that $(x,x)=(y,y)=(x,y)=0$.
    Then $C(A(x,y))$ is a self-orthogonal code if and only if $C$ is a self-orthogonal code.
\end{cor}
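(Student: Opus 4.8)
The plan is to deduce Corollary~\ref{F2:thm:SO} directly from the computation already carried out in the proof of Theorem~\ref{F2:thm:LCP}. Recall that a code is self-orthogonal precisely when $GG^T = 0$ for a (equivalently, any) generator matrix $G$; indeed, $C \subset C^\perp$ iff every pair of rows of $G$ is orthogonal iff $GG^T$ is the zero matrix. So the statement to prove is: $\begin{pmatrix} I_k & A(x,y) \end{pmatrix}\begin{pmatrix} I_k & A(x,y) \end{pmatrix}^T = 0$ if and only if $\begin{pmatrix} I_k & A \end{pmatrix}\begin{pmatrix} I_k & A \end{pmatrix}^T = 0$.

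The key step is the identity $(r_i', r_j') = (r_i, r_j)$ established in the proof of Theorem~\ref{F2:thm:LCP}, which uses the hypotheses $(x,x) = (y,y) = (x,y) = 0$. From this, $A(x,y)A(x,y)^T = AA^T$ entrywise, hence $I_k + A(x,y)A(x,y)^T = I_k + AA^T$, i.e.\ the two Gram matrices $G'G'^T$ and $GG^T$ (with $G' = \begin{pmatrix} I_k & A(x,y)\end{pmatrix}$) are literally equal. Therefore one is zero iff the other is zero, which gives the claimed equivalence.

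Concretely, I would first state the characterization that $C$ is self-orthogonal iff $GG^T = 0$ (for any generator matrix $G$), citing that this is immediate from the definition of $C^\perp$ together with the fact that the rows of $G$ span $C$. Then I would invoke the equality $\begin{pmatrix} I_k & A(x,y) \end{pmatrix}\begin{pmatrix} I_k & A(x,y) \end{pmatrix}^T = \begin{pmatrix} I_k & A \end{pmatrix}\begin{pmatrix} I_k & A \end{pmatrix}^T$ proved in Theorem~\ref{F2:thm:LCP} and conclude. There is essentially no obstacle here: the corollary is a one-line consequence once the Gram-matrix equality is in hand, and that equality was the real content of the preceding theorem. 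The only point requiring a modicum of care is making explicit that self-orthogonality is equivalent to the vanishing of the Gram matrix of a generator matrix in standard form $\begin{pmatrix} I_k & A \end{pmatrix}$ — but since such a matrix always has rank $k$ and hence its rows form a basis of $C$, this is routine.
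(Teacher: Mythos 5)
Your proof is correct and is essentially the paper's argument: the paper deduces the corollary from the statement of Theorem~\ref{F2:thm:LCP} by noting that $C$ is self-orthogonal if and only if $\dim(\Hull(C))=k$, whereas you deduce it from the Gram-matrix identity inside that theorem's proof by noting that $C$ is self-orthogonal if and only if $GG^T=0$ --- and by Lemma~\ref{lem:rankGGT} these two characterizations are the same observation. In both cases the corollary is a one-line consequence of the computation $(r_i',r_j')=(r_i,r_j)$.
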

\begin{proof}
    It holds that $C$ is a self-orthogonal code if and only if $\dim(\Hull(C))=k$.
    The result follows from Theorem~\ref{F2:thm:LCP}.
\end{proof}
\begin{rem} \label{rem:SO}
    With the notation of Corollary~\textup{\ref{F2:thm:SO}}, suppose that $C$ is a binary self-dual code with length $n \equiv 0 \pmod 4$.
    Then it follows that $(x,x)  = (x,\1_{n-k}) = (\1_{n-k},\1_{n-k}) = 0$ if $x$ has an even weight.
    Let $r_i, r'_i$ be the $i$-th rows of $A, A(x, \1_{n-k})$ respectively.
    Since $C$ is an even code, $\wt(r_i) \equiv 1 \pmod 2$ for all $1 \le i \le n$.
    Therefore we obtain the following:
    \begin{equation*}
        \begin{split}
            r_i' &= r_i + (r_i,\1_{n-k})x + (r_i,x)\1_{n-k}\\
                 &= r_i + x + (r_i,x)\1_{n-k},
        \end{split}
    \end{equation*}
    which shows that $A(x, \1_{n-k})$ is identical to $B'_{\Gamma'}$, the first case of~\textup{\cite[Theorem 2.2]{harada1996existence}}.
    Therefore Corollary~\textup{\ref{F2:thm:SO}} is a generalized version of the first case of~\textup{\cite[Theorem 2.2]{harada1996existence}}.
\end{rem}

\begin{cor} \label{F2:thm:LCD}
    Let $C$ be an $[n,k]$ code over $\FF_q$ with generator matrix $G = \begin{pmatrix} I_k & A \\ \end{pmatrix}$ and let $x,y \in \FF_q^{n-k}$.
    Suppose that $(x,x)=(y,y)=(x,y)=0$.
    Then $C(A(x,y))$ is an LCD code if and only if $C$ is an LCD code.
\end{cor}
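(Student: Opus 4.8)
The plan is to deduce this corollary from Theorem~\ref{F2:thm:LCP} in exactly the same spirit as Corollary~\ref{F2:thm:SO}, but now using the opposite extreme of the hull dimension. The key observation is that a code $C$ is LCD precisely when $\Hull(C) = C \cap C^\perp = \{\0_n\}$, i.e.\ when $\dim(\Hull(C)) = 0$; so LCD codes are the codes of minimal (zero) hull dimension, just as self-orthogonal codes (and in particular self-dual codes) sit at the maximal end of the scale, which is what made Corollary~\ref{F2:thm:SO} work.

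First I would invoke Theorem~\ref{F2:thm:LCP}, whose hypotheses $(x,x)=(y,y)=(x,y)=0$ are exactly those assumed here, to obtain $\dim(\Hull(C(A(x,y)))) = \dim(\Hull(C))$. Then the chain of equivalences $C(A(x,y))$ is LCD $\iff \dim(\Hull(C(A(x,y)))) = 0 \iff \dim(\Hull(C)) = 0 \iff C$ is LCD gives the claim.

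An equivalent route, which merely unwinds the same argument, is to appeal to Massey's characterization (Theorem~\ref{thm:EucLCD}): $C$ is LCD iff its Gram matrix $GG^T = I_k + AA^T$ is nonsingular. The computation already performed in the proof of Theorem~\ref{F2:thm:LCP} shows that the Gram matrix of the new generator matrix is $I_k + A(x,y)A(x,y)^T = I_k + AA^T$, i.e.\ literally the same $k \times k$ matrix, so one is nonsingular iff the other is. I do not anticipate any genuine obstacle: all the content lies in the identity $(r'_i, r'_j) = (r_i, r_j)$ established earlier, and the only thing left to supply is the translation ``LCD $\leftrightarrow$ zero hull dimension $\leftrightarrow$ nonsingular Gram matrix.''
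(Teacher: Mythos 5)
Your proposal is correct and matches the paper's proof exactly: the paper likewise observes that $C$ is LCD if and only if $\dim(\Hull(C))=0$ and then invokes Theorem~\ref{F2:thm:LCP}. Your alternative route via Massey's criterion is just an unwinding of the same computation and adds nothing that needs checking.
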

\begin{proof}
    It holds that $C$ is an LCD code if and only if $\dim(\Hull(C))=0$.
    The result follows from Theorem~\ref{F2:thm:LCP}.
\end{proof}
\begin{rem} \label{rem:LCD}
    With the notation of Corollary~\textup{\ref{F2:thm:LCD}}, suppose that $C$ is a binary even LCD code and $n-k$ is even.
    Then it follows that $(x,x)  = (x,\1_{n-k}) = (\1_{n-k},\1_{n-k}) = 0$ if $x$ has an even weight.
    Let $r_i, r'_i$ be the $i$-th rows of $A, A(x, \1_{n-k})$ respectively.
    Since $C$ is an even code, $\wt(r_i) \equiv 1 \pmod 2$ for all $1 \le i \le n$.
    Therefore we obtain the following:
    \begin{equation*}
        \begin{split}
            r_i' &= r_i + (r_i,\1_{n-k})x + (r_i,x)\1_{n-k}\\
                 &= r_i + x + (r_i,x)\1_{n-k},
        \end{split}
    \end{equation*}
    which shows that $A(x, \1_{n-k})$ is identical to $A(x)$ in~\textup{\cite[Theorem 3.3]{harada2021construction}}.
    Therefore Corollary~\textup{\ref{F2:thm:LCD}} is a generalized version of~\textup{\cite[Theorem 3.3]{harada2021construction}}.
\end{rem}

\begin{lem}[{\cite[Theorem 1.4.3]{huffman2010fundamentals}}] \label{F2:lem:wt2}
    Let $x = (x_1, x_2, \ldots, x_n), y = (y_1, y_2, \ldots, y_n) \in \FF_2^n$.
    Then $\wt(x+y) = \wt(x) + \wt(y) - 2 \wt(x*y)$, where $x*y = (x_1y_1, x_2y_2, \ldots, x_ny_n)$.
\end{lem}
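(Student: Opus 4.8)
The plan is to prove this as an identity among non-negative integers by partitioning the $n$ coordinate positions according to the pair of values $(x_i,y_i)$ occurring there. First I would set, for each of the four patterns $(0,0)$, $(0,1)$, $(1,0)$, $(1,1) \in \FF_2^2$, the number of positions $i \in \{1,\dots,n\}$ realizing that pattern; call these $a$, $b$, $c$, $d$ respectively. These four sets of positions are pairwise disjoint and cover everything, so $a+b+c+d = n$.

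Next I would read off each weight in the statement from these counts. Since $x_i = 1$ exactly at the positions counted by $c$ and $d$, we have $\wt(x) = c+d$; symmetrically $\wt(y) = b+d$. The vector $x*y$ has a $1$ in position $i$ precisely when $x_i = y_i = 1$, so $\wt(x*y) = d$. Finally, over $\FF_2$ the entry $(x+y)_i = x_i + y_i$ is nonzero exactly when $x_i \neq y_i$, i.e.\ at the positions counted by $b$ and $c$, hence $\wt(x+y) = b+c$.

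The last step is a one-line substitution, carried out in $\ZZ$ rather than $\FF_2$: $\wt(x) + \wt(y) - 2\wt(x*y) = (c+d) + (b+d) - 2d = b+c = \wt(x+y)$, which is the asserted equality. There is essentially no obstacle; the only points requiring (trivial) care are that the four position-classes really are disjoint and exhaustive, and that the arithmetic is integer arithmetic. Equivalently, one can phrase the whole argument in terms of supports: $\mathrm{supp}(x*y) = \mathrm{supp}(x) \cap \mathrm{supp}(y)$ while $\mathrm{supp}(x+y)$ is the symmetric difference $\mathrm{supp}(x) \,\triangle\, \mathrm{supp}(y)$, so the claim is exactly the inclusion–exclusion identity $|S \,\triangle\, T| = |S| + |T| - 2\,|S \cap T|$ applied to $S = \mathrm{supp}(x)$, $T = \mathrm{supp}(y)$.
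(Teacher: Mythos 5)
Your proof is correct and complete: the partition of coordinates by the pattern $(x_i,y_i)$, the identification $\wt(x)=c+d$, $\wt(y)=b+d$, $\wt(x*y)=d$, $\wt(x+y)=b+c$, and the integer-arithmetic substitution give exactly the claimed identity, and the symmetric-difference reformulation is an accurate restatement. The paper itself offers no proof of this lemma --- it is quoted directly from Huffman and Pless \cite[Theorem 1.4.3]{huffman2010fundamentals} --- and your argument is precisely the standard one used there, so there is nothing to reconcile.
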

\begin{lem} \label{F2:lem:wt4}
    Let $C$ be a binary $[n,k]$ code with generator matrix $\begin{pmatrix} I_k & A \\ \end{pmatrix}$ and let $x,y \in \FF_q^{n-k}$.
    Suppose that $\wt(x) \equiv \wt(y) \equiv 0 \pmod{4}$ and $(x,y)=0$.
    Then $\wt(r_i') \equiv \wt(r_i) \pmod{4}$, where $r_i, r'_i$ denote the $i$-th rows of $A,A(x,y)$ respectively.
\end{lem}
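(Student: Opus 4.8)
The plan is to get everything down to a single use of Lemma~\ref{F2:lem:wt2} by first isolating the vector added to each row. Working over $\FF_2$, write $a = (r_i, y)$ and $b = (r_i, x)$ as elements of $\{0,1\}$ and put $z = ax + by \in \FF_2^{n-k}$, so that $r_i' = r_i + z$. The heart of the argument is the claim that, whatever the values of $a$ and $b$, one has $\wt(z) \equiv 0 \pmod{4}$ and $(r_i, z) = 0$.

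To establish the claim I would run through the four possibilities for the pair $(a,b)$. For $(a,b) = (0,0)$ we have $z = \0_{n-k}$ and both statements are trivial. For $(a,b) = (1,0)$ we have $z = x$, so $\wt(z) = \wt(x) \equiv 0 \pmod{4}$ by hypothesis and $(r_i, z) = (r_i, x) = b = 0$; the case $(a,b) = (0,1)$ is handled identically with the roles of $x$ and $y$ exchanged. The only case carrying real content is $(a,b) = (1,1)$, where $z = x + y$: here Lemma~\ref{F2:lem:wt2} gives $\wt(z) = \wt(x) + \wt(y) - 2\wt(x * y)$, and since $(x,y) = 0$ the integer $\wt(x * y)$ is even, so $2 \wt(x*y) \equiv 0 \pmod{4}$ and hence $\wt(z) \equiv \wt(x) + \wt(y) \equiv 0 \pmod{4}$; simultaneously $(r_i, z) = (r_i, x) + (r_i, y) = b + a = 0$ in $\FF_2$.

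Once the claim is available, the conclusion is immediate: $(r_i, z) = 0$ in $\FF_2$ means precisely that the integer $\wt(r_i * z)$ is even, so $2 \wt(r_i * z) \equiv 0 \pmod{4}$, and applying Lemma~\ref{F2:lem:wt2} to $r_i' = r_i + z$ gives
\[
    \wt(r_i') = \wt(r_i) + \wt(z) - 2 \wt(r_i * z) \equiv \wt(r_i) \pmod{4} .
\]
The main thing to be careful about throughout is the bookkeeping between the $\FF_2$-valued inner products and the $\ZZ$-valued weights taken modulo $4$ — in particular the repeated use of the fact that a vanishing inner product over $\FF_2$ forces the corresponding componentwise-product weight to be even, which is exactly what upgrades a ``$\bmod\ 2$'' statement to the ``$\bmod\ 4$'' statement we need. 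Beyond that, the $(a,b) = (1,1)$ case, where both hypotheses $\wt(x) \equiv \wt(y) \equiv 0 \pmod{4}$ and $(x,y) = 0$ must be invoked at once, is the only step with any substance.
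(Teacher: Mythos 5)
Your proof is correct and takes essentially the same route as the paper's: both isolate the vector $z$ added to $r_i$, show that $\wt(z)\equiv 0\pmod 4$ and that $\wt(r_i*z)$ is even, and then conclude with a single application of Lemma~\ref{F2:lem:wt2}. The only difference is that you establish the claim about $z$ by an explicit case analysis on the pair $\bigl((r_i,y),(r_i,x)\bigr)$ whereas the paper carries these coefficients symbolically through $\ZZ$-valued inner-product identities; this is a cosmetic variation (and your case analysis arguably handles the sign bookkeeping in the $(1,1)$ case more transparently than the paper's integer computation).
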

\begin{proof}
    It holds that
    \begin{align*}
        \wt(r_i*((r_i,y)x-(r_i,x)y)) &=(r_i,\ (r_i,y)x-(r_i,x)y)\\
        &=(r_i,y)(r_i,x)-(r_i,x)(r_i,y)\\
        &=0,\\
        \wt((r_i,y)x*(r_i,x)y) &=((r_i,y)x,(r_i,x)y)\\
        &=(r_i,y)(r_i,x)(x,y)\\
        &\equiv 0 \pmod 2,
    \end{align*}
    where we regard $r_i, x, y$ as vectors in $\ZZ^{n-k}$.
    Therefore, by Lemma~\ref{F2:lem:wt2}, it follows that
    \begin{align*}
        \wt(r_i')
        &=\wt(r_i+(r_i,y)x-(r_i,x)y) \\
        &=\wt(r_i)+\wt((r_i,y)x-(r_i,x)y) - 2\wt(r_i*((r_i,y)x-(r_i,x)y))\\
        &=\wt(r_i)+\wt((r_i,y)x)+\wt((r_i,x)y) - 2\wt((r_i,y)x*(r_i,x)y)\\
        & \equiv \wt(r_i) \pmod{4}.
    \end{align*}
    This completes the proof.
\end{proof}

\begin{thm} \label{F2:thm:DE}
    Let $C$ be a binary $[n,k]$ code with generator matrix $\begin{pmatrix} I_k & A \\ \end{pmatrix}$ and let $x,y \in \FF_2^{n-k}$.
    Suppose that $\wt(x) \equiv \wt(y) \equiv 0 \pmod 4$ and $(x,y) = 0$.
    Then $C(A(x,y))$ is a doubly even code if and only if $C$ is a doubly even code.
\end{thm}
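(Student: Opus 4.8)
The plan is to combine the weight congruence of Lemma~\ref{F2:lem:wt4}, the self-orthogonality equivalence of Corollary~\ref{F2:thm:SO}, and the characterization of doubly even codes in Lemma~\ref{F2:lem:doublyevencode}. First I would note that the present hypotheses already supply what Corollary~\ref{F2:thm:SO} needs: over $\FF_2$ the condition $\wt(x) \equiv 0 \pmod 4$ forces $(x,x) = \wt(x) \bmod 2 = 0$, and likewise $(y,y) = 0$, while $(x,y) = 0$ is assumed outright. Hence Corollary~\ref{F2:thm:SO} applies and $C(A(x,y))$ is self-orthogonal if and only if $C$ is self-orthogonal.

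Next I would record the bookkeeping on row weights. The $i$-th row of $\begin{pmatrix} I_k & A \end{pmatrix}$ is $e_i$ followed by $r_i$, so it has weight $1 + \wt(r_i)$; similarly the $i$-th row of $\begin{pmatrix} I_k & A(x,y) \end{pmatrix}$ has weight $1 + \wt(r_i')$. By Lemma~\ref{F2:lem:wt4} we have $\wt(r_i') \equiv \wt(r_i) \pmod 4$, hence $1 + \wt(r_i') \equiv 1 + \wt(r_i) \pmod 4$ for every $i$.

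Now assume $C$ is doubly even. By Lemma~\ref{F2:lem:doublyevencode}(ii) $C$ is self-orthogonal, so by the first step $C(A(x,y))$ is self-orthogonal as well. The rows of $\begin{pmatrix} I_k & A \end{pmatrix}$ are codewords of $C$, hence have weight divisible by $4$; by the second step so do the rows of $\begin{pmatrix} I_k & A(x,y) \end{pmatrix}$. Lemma~\ref{F2:lem:doublyevencode}(i) then gives that $C(A(x,y))$ is doubly even. For the converse, the whole argument is symmetric in $A$ and $A(x,y)$: the congruence of Lemma~\ref{F2:lem:wt4} and the equivalence of Corollary~\ref{F2:thm:SO} do not distinguish the two matrices, so interchanging their roles shows $C$ is doubly even whenever $C(A(x,y))$ is. (If one prefers, the symmetry can be made explicit: using $(x,x)=(y,y)=(x,y)=0$ one checks $(r_i',x)=(r_i,x)$ and $(r_i',y)=(r_i,y)$, so applying the construction to $A(x,y)$ with the same $x,y$ returns $r_i'' = r_i + 2(r_i,y)x - 2(r_i,x)y = r_i$ over $\FF_2$; the construction is thus an involution and the converse follows from the forward implication applied to $C(A(x,y))$.)

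I do not expect a genuine obstacle here; the one point worth flagging is the reduction in the first step, namely recognizing that divisibility of $\wt(x)$ and $\wt(y)$ by $4$ over $\FF_2$ automatically yields the conditions $(x,x)=(y,y)=0$ required to invoke Corollary~\ref{F2:thm:SO}, so that the self-orthogonality half of "doubly even" is handled for free and only the row-weight half needs Lemma~\ref{F2:lem:wt4}.
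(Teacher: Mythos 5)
Your proposal is correct and follows essentially the same route as the paper: combine Corollary~\ref{F2:thm:SO} (self-orthogonality is preserved) with Lemma~\ref{F2:lem:wt4} (row weights of $A$ are preserved mod $4$) and then apply both parts of Lemma~\ref{F2:lem:doublyevencode}. Your explicit observations --- that $\wt(x)\equiv\wt(y)\equiv 0\pmod 4$ already yields the inner-product hypotheses of Corollary~\ref{F2:thm:SO} over $\FF_2$, and that the converse can be justified by noting the construction is an involution --- are slightly more careful than the paper's ``by the same argument'' but do not change the argument.
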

\begin{proof}
    We denote by $r_i, r'_i$ the $i$-th rows of $A,A(x,y)$ respectively.
    Suppose that $C$ is a doubly even code.
    Then, by Corollary~\ref{F2:thm:SO} and the second part of Lemma~\ref{F2:lem:doublyevencode}, $C(A(x,y))$ is a self-orthogonal code.
    Furthermore, by Lemma~\ref{F2:lem:wt4}, $\wt(r'_i) \equiv \wt(r_i) \equiv 3 \pmod 4$ follows for all $1 \le i \le n$.
    Therefore it holds that $C(A(x,y))$ is a doubly even code by the first part of Lemma~\ref{F2:lem:doublyevencode}  
    By the same argument, the converse holds.
\end{proof}
\begin{rem} \label{rem:DE}
    With the notation of Theorem~\textup{\ref{F2:thm:DE}}, suppose that $C$ is a binary doubly even self-dual code.
    Then it follows that $\wt(x) \equiv \wt(\1_{n-k}) \equiv 0 \pmod 4$ and $(x,\1_{n-k})=0$ if $x$ has a weight divisible by four.
    Let $r_i, r'_i$ be the $i$-th rows of $A, A(x, \1_{n-k})$ respectively.
    Since $C$ is doubly even, $\wt(r_i) \equiv 3 \pmod 4$ for all $1 \le i \le n$.
    Therefore we obtain the following:
    \begin{equation*}
        \begin{split}
            r_i' &= r_i + (r_i,\1_{n-k})x + (r_i,x)\1_{n-k}\\
                 &= r_i + x + (r_i,x)\1_{n-k},
        \end{split}
    \end{equation*}
    which shows that $A(x, \1_{n-k})$ is identical to $A'_{\Gamma}$ in~\textup{\cite[Theorem 2.2]{harada1995new}}, where $m$ is even.
    Therefore Theorem~\textup{\ref{F2:thm:DE}} is a generalized version of~\textup{\cite[Theorem 2.2]{harada1995new}}, where $m$ is even.
\end{rem}

\section{Basic properties} \label{sec:basicProperties}
In this section, we state basic properties of the construction method given in Section~\ref{sec:constructMethod}.
Let $x,y \in \FF_q^m$.
Define the following $m \times m$ matrix $M(x,y)$:
\begin{equation*}
    M(x,y) = I_{m} + y^Tx - x^Ty.
\end{equation*}

\begin{lem} \label{F2:prp:AMxy}
    Let $C$ be an $[n,k]$ code over $\FF_q$ with generator matrix $\begin{pmatrix} I_k & A \\ \end{pmatrix}$ and let $x,y \in \FF_q^{n-k}$.
    Then it holds that
    \begin{equation*}
        A(x, y) = A M(x,y).
    \end{equation*}
\end{lem}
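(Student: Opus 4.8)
The plan is to verify the claimed identity one row at a time. Throughout, write $m = n-k$ and recall that $x$ and $y$ are regarded as row vectors in $\FF_q^m$, so that for any row vector $z \in \FF_q^m$ the product $zy^T$ is the $1 \times 1$ matrix whose single entry is the standard inner product $(z,y)$; in particular, by associativity of matrix multiplication, $z(y^Tx) = (zy^T)x = (z,y)\,x$ and $z(x^Ty) = (z,x)\,y$ as row vectors in $\FF_q^m$.

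Let $r_i$ denote the $i$-th row of $A$. Since $M(x,y)$ is an $m \times m$ matrix and $A$ is $k \times m$, the $i$-th row of the product $AM(x,y)$ is
\begin{equation*}
    r_i M(x,y) = r_i\left(I_m + y^Tx - x^Ty\right) = r_i + r_i(y^Tx) - r_i(x^Ty).
\end{equation*}
Applying the observation above with $z = r_i$, the right-hand side equals $r_i + (r_i,y)\,x - (r_i,x)\,y$, which is precisely the definition of the $i$-th row $r'_i$ of $A(x,y)$.

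Since the two $k \times m$ matrices $A(x,y)$ and $AM(x,y)$ agree in every row, we conclude $A(x,y) = AM(x,y)$. There is no real obstacle here beyond bookkeeping: the only point that needs care is distinguishing the scalar $r_iy^T = (r_i,y)$ from the $m \times m$ outer product $y^Tx$ (and likewise $x^Ty$), together with the regrouping $r_i(y^Tx) = (r_iy^T)x$; once this is in place the identity is immediate. I would also remark in passing that Lemma~\ref{F2:prp:AMxy} makes the inner-product computation in the proof of Theorem~\ref{F2:thm:LCP} transparent, since under the hypothesis $(x,x)=(y,y)=(x,y)=0$ one checks directly that $M(x,y)M(x,y)^T = I_m$, whence $A(x,y)A(x,y)^T = AM(x,y)M(x,y)^TA^T = AA^T$.
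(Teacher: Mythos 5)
Your proof is correct and follows essentially the same route as the paper's: both verify the identity row by row via $r_iM(x,y) = r_i(I_{n-k}+y^Tx-x^Ty) = r_i+(r_i,y)x-(r_i,x)y$, with yours merely making the associativity step $r_i(y^Tx)=(r_iy^T)x$ explicit where the paper instead compares $(i,j)$-entries. The closing remark that $M(x,y)M(x,y)^T=I_m$ under the orthogonality hypotheses is also correct and consistent with how the paper uses the lemma.
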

\begin{proof}
    For a matrix $M$, we denote by $M_{i,j}$ the $(i,j)$-entry of $M$.
    By definition it follows that
    \[
        (A M(x,y))_{i,j} = \sum_{l=1}^{n-k} A_{i,l} M(x,y)_{l, j}.
    \]
    On the other hand, it holds that
    \begin{align*}
        r'_i &= r_i + (r_i,y)x - (r_i,x)y\\
             &= r_i (I_{n-k} + y^Tx - x^Ty)\\
             &= (A_{i,1}, A_{i,2}, \ldots, A_{i,n}) M(x,y)\\
             &= (\sum_{l=1}^{n-k} A_{i,l} M(x,y)_{l,1},\sum_{l=1}^{n-k} A_{i,l} M(x,y)_{l,2}, \dots,\sum_{l=1}^{n-k} A_{i,l} M(x,y)_{l,n}), 
    \end{align*}
    where $r_i,r'_i$ denote the $i$-th rows of $A,A(x,y)$ respectively.
    Therefore it holds that
    \[
        A(x,y)_{i,j} = \sum_{l=1}^{n-k} A_{i,l} M(x,y)_{l, j}.
    \]
    This completes the proof.
\end{proof}

\begin{thm} \label{F2:lem:Mxy} 
    Let $C$ be an $[n,k]$ code over $\FF_q$ with generator matrix $\begin{pmatrix} I_k & A \\ \end{pmatrix}$ and let $x,y \in \FF_q^{n-k}$.
    Suppose that $(x, x) = (y, y) = (x, y) = 0$.
    Then the following holds:
    \begin{align}
        &C(A(x,y)) = C(A(-x,-y)), \label{F2:eq:Mxy}\\
        &C(A(y,x)) = C(A(x,-y)) = C(A(-x,y)). \label{F2:eq:Myx}
    \end{align}
\end{thm}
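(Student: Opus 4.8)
The plan is to reduce each of~\eqref{F2:eq:Mxy} and~\eqref{F2:eq:Myx} to an equality of matrices and then to verify that equality by plugging into the definition of $M(\cdot,\cdot)$, using Lemma~\ref{F2:prp:AMxy} to translate between the two.

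First I would record the elementary fact that the matrix $B$ is recovered from the code $C(B)$: since $\begin{pmatrix} I_k & B \end{pmatrix}$ is in standard form, for each $1 \le i \le k$ there is a unique codeword of $C(B)$ whose first $k$ coordinates form the $i$-th standard basis vector of $\FF_q^k$, and its last $n-k$ coordinates are exactly the $i$-th row of $B$. Hence $C(B_1) = C(B_2)$ if and only if $B_1 = B_2$, so~\eqref{F2:eq:Mxy} and~\eqref{F2:eq:Myx} are equivalent to the matrix identities $A(x,y) = A(-x,-y)$ and $A(y,x) = A(x,-y) = A(-x,y)$. I would then apply Lemma~\ref{F2:prp:AMxy}, which gives $A(u,v) = A\,M(u,v)$ for all $u,v \in \FF_q^{n-k}$ with $M(u,v) = I_{n-k} + v^T u - u^T v$, so it is enough to prove $M(-x,-y) = M(x,y)$ and $M(y,x) = M(x,-y) = M(-x,y)$.

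These last identities are immediate from the definition together with $(-u)^T(-v) = u^T v$ and $(-u)^T v = -(u^T v)$: substituting $(u,v) = (-x,-y)$ yields $I_{n-k} + y^T x - x^T y = M(x,y)$, while each of the substitutions $(y,x)$, $(x,-y)$ and $(-x,y)$ unwinds to $I_{n-k} + x^T y - y^T x$. I do not anticipate a genuine obstacle here; the only step worth isolating is the injectivity of $B \mapsto C(B)$, which is what converts the statement about codes into a statement about matrices, after which the substantive work has already been done in Lemma~\ref{F2:prp:AMxy}. I would note in passing that these particular identities hold for arbitrary $x,y$; the standing hypothesis $(x,x) = (y,y) = (x,y) = 0$ is carried along only because it is the regime in which $C(A(x,y))$ is the hull-dimension-preserving construction of interest (Theorem~\ref{F2:thm:LCP}).
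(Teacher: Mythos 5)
Your proposal is correct and follows essentially the same route as the paper: both reduce the claim via Lemma~\ref{F2:prp:AMxy} to the identities $M(-x,-y)=M(x,y)$ and $M(x,-y)=M(-x,y)=M(y,x)$, which are verified by direct substitution into the definition of $M$. The only additions on your side are the (unneeded for this direction) injectivity of $B\mapsto C(B)$ and the accurate remark that the orthogonality hypotheses play no role in these identities; neither changes the substance of the argument.
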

\begin{proof}
    For~\eqref{F2:eq:Mxy}, it holds that
    \begin{align*}
        M(-x,-y) &= I_m + (-y)^T(-x) - (-x)^T(-y)\\
        &= I_m + y^Tx - x^Ty\\
        &= M(x,y).
    \end{align*}
    For~\eqref{F2:eq:Myx}, the following holds:
    \begin{align*}
        M(x,-y) &= I_m - y^Tx + x^Ty\\
        &= I_m + x^Ty - y^Tx\\
        &= M(y,x),\\
        M(-x,y) &= I_m - y^Tx + x^Ty\\
        &= I_m + x^Ty - y^Tx\\
        &= M(y,x).
    \end{align*}
    This completes the proof.
\end{proof}
In the following sections, we apply the construction method only to binary codes.
However, for codes over $\FF_q\ (q \ge 3)$, Theorem~\ref{F2:lem:Mxy} reduces computations.

\section{Extremal binary doubly even $[56,28,12]$ codes} \label{sec:SD}
In this section we are concerned only with binary codes.
Therefore we omit the term ``binary''.
Bhargava, Young and Bhargava~\cite{bhargava1981characterization} constructed an extremal doubly even $[56,28,12]$ code.
Yorgov~\cite{yorgov1987method} proved that there exist exactly $16$ inequivalent extremal doubly even $[56,28,12]$ codes with automorphisms of order $13$.
As stated in~\cite{harada2006self}, one of the $16$ codes in~\cite{yorgov1987method} is equivalent to the code in~\cite{bhargava1981characterization}.
Bussemarker and Tonchev~\cite{bussemarker1989new} constructed $6$ extremal doubly even $[56,28,12]$ codes.
As stated in~\cite{harada2006self}, the first code among the six codes in~\cite{bussemarker1989new} is equivalent to the code in~\cite{bhargava1981characterization}.
Moreover, Kimura~\cite{kimura1994extremal} showed that $5$th and $6$th codes in~\cite{bussemarker1989new} are equivalent.
Harada~\cite{harada1996existence} constructed $137$ inequivalent extremal doubly even $[56,28,12]$ codes.
Harada, Gulliver and Kaneta~\cite{harada1998classification} showed that there exist exactly nine inequivalent extremal double circulant doubly even $[56,28,12]$ codes.
Harada~\cite{harada2006self} constructed $1122$ inequivalent extremal doubly even $[56,28,12]$ codes.
This result is a generalization of~\cite{harada1996existence} and for any code $C$ in~\cite{harada1996existence} there exists a code $C'$ in~\cite{harada2006self} such that $C \simeq C'$.
Yankov and Russeva~\cite{yankov2011binary} proved that there exist exactly $4202$ inequivalent extremal doubly even $[56,28,12]$ codes having automorphisms of order $7$.
As stated in~\cite{yankov2011binary}, one of the $4202$ codes in~\cite{yankov2011binary} is equivalent to a code in~\cite{harada1996existence}.
Yankov and Lee~\cite{yankov2014newbinary} proved that there exist exactly $3763$ inequivalent extremal $[56,28,12]$ codes having automorphisms of order $5$.
Therefore the number of previously known extremal doubly even $[56,28,12]$ codes is $9115$, as stated in~\cite[Proposition 7]{yankov2014newbinary}.

In~\cite{harada2006self}, Harada applied~\cite[Theorem 2.2]{harada1995new} to six inequivalent extremal double circulant doubly even $[56,28,12]$ codes $D11, C_{56,1}, \ldots, C_{56,5}$.
As stated earlier, Theorem~\ref{F2:thm:DE} is a generalized version of~\cite[Theorem 2.2]{harada1995new}.
In this section, we apply Theorem~\ref{F2:thm:DE} to $D11, C_{56,1}, \dots, C_{56,5}$ and construct $661$ new inequivalent extremal doubly even $[56,28,12]$ codes.
This illustrates the effectiveness of Theorem~\ref{F2:thm:DE}.

In order to illustrate our method, we consider the code $D11$ as an example.
Let $y_i$ denote the vector of length $28$ such that $y_i=(0,\dots,0,1,\dots,1), \wt(y_i) = i$.
Applying Theorem~\ref{F2:thm:DE} to $y_4$ and all $x \in \FF_2^{28}$ such that $x \ne \0_{28}$ and $(x,x)=(x,y_4)=0$, we constructed $45$ inequivalent extremal doubly even $[56,28,12]$ codes. 
Furthermore, applying Theorem~\ref{F2:thm:DE} to $y_i\ (i=8,12,16,20,24)$ and all $x \in \FF_2^{28}$ such that $x \ne \0_{28}$ and $(x,x)=(x,y_i)=0$, we constructed $45,19,15,2,33,4$ inequivalent extremal doubly even $[56,28,12]$ codes respectively.

By the following method, we verified that the above codes are all inequivalent.
For an extremal doubly even $[56,28,12]$ code $C$, we define $M=(m_{i,j})$ to be an $8196 \times 56$ matrix whose rows composed of codewords of $C$ with weight $12$.
Furthermore, for a positive integer $t$, we define 
\begin{equation*}
    N_t = \# \left\{ \{ j_1,j_2,j_3,j_4 \} \in  {\displaystyle { 56 \choose 4 }}  \mid \sum_{i=1}^{8196} m_{i,j_1} m_{i,j_2} m_{i,j_3} m_{i,j_4} = t \right\}.
\end{equation*}
Harada~\cite{harada2006self} showed that two extremal doubly even $[56,28,12]$ codes $C_1,C_2$ are inequivalent if the sequences $(N_1, N_2, \dots)$ constructed from $C_1,C_2$ are distinct.
According to this result, we compared the sequence $(N_1, N_2, \dots, N_{56})$ for the classification.
Consequently we found no pair of codes whose sequences are identical.
Therefore we verified that the number of inequivalent codes constructed from $D11$ is $118$.
By the same method, we constructed inequivalent extremal doubly even $[56,28,12]$ codes from $C_{56,i}\ (i=1,2,\dots,5)$.
In Table~\ref{table:inequiv} we show the number of inequivalent codes constructed by this method.
We denote the inequivalent codes constructed from $D11,C_{56,1},\dots,C_{56,5}$ by $D_i\ (i=1,2,\dots,118), E_i\ (i=1,2,\dots,56), F_i\ (i=1,2,\dots,105), G_i\ (i=1,2,\dots,59), H_i\ (i=1,2,\dots,212), K_i\ (i=1,2,\dots,115)$ respectively.
The $x,y$ in Corollary~\ref{F2:thm:DE} for all codes we constructed can be obtained electronically from \url{https://www.math.is.tohoku.ac.jp/~mharada/Ishizuka/56.txt}.
\begin{table}[thbp]\caption{Inequivalent extremal doubly even $[56,28,12]$ codes}\label{table:inequiv}
\begin{center}
\begin{tabular}{c|ccccccccccccccccccccccccccccccccccc}
       & $y_4$ & $y_8$ & $y_{12}$ & $y_{16}$ & $y_{20}$ & $y_{24}$ & total \\
\hline
     $D11$ & $45$ & $19$ & $15$ & $2$ & $33$ & $4$ & $118$\\
     $C_{56,1}$ & $16$ & $3$ & $1$ & $0$ & $10$ & $26$ & $56$\\
     $C_{56,2}$ & $34$ & $27$ & $26$ & $1$ & $3$ & $14$ & $105$\\
     $C_{56,3}$ & $10$ & $0$ & $23$ & $2$ & $0$ & $24$ & $59$\\
     $C_{56,4}$ & $10$ & $109$ & $17$ & $58$ & $2$ & $16$ & $212$\\
     $C_{56,5}$ & $17$ & $53$ & $25$ & $11$ & $5$ & $4$ & $115$\\
\end{tabular}
\end{center}
\end{table}
Comparing sequences $(N_1, N_2, \dots, N_{56})$, we found that there exist four pairs of codes $(D_{115}, K_{112}), (D_{116}, K_{113}), (D_{117}, K_{114}), (D_{118}, K_{115})$ whose sequences are identical.
By the {\sc Magma} function \texttt{IsIsomorphic}, we verified that two codes of all the four pairs are equivalent.
Therefore the number of inequivlent extremal doubly even $[56,28,12]$ codes constructed as above is $661$. 

Finally, we verified that the $661$ codes are inequivalent to any of the previously known extremal doubly even $[56,28,12]$ codes as follows:
By the {\sc Magma} function \texttt{AutomorphismGroup}, we verified that the $661$ codes have automorphism groups of order $1$.
Consequently it follows that the $661$ codes are inequivalent to any of the codes in~\cite{yankov2014newbinary},~\cite{yankov2011binary},~\cite{yorgov1987method}.
Furthermore we verified that all the codes except $D_{118}$ have sequences $(N_1, N_2, \ldots, N_{56})$ different from that of any code in~\cite{bhargava1981characterization},~\cite{bussemarker1989new},~\cite{harada2006self},~\cite{harada1998classification},~\cite{yorgov1987method}.
The sequence $(N_1,N_2,\dots,N_{56})$ of $D_{118}$ is identical to that of the $25$th code constructed from $C_{56,2}$ in~\cite{harada2006self}.
However we verified by the Magma function \texttt{IsIsomorphic} that the two codes are inequivalent.
Consequently it follows that the $661$ codes are inequivalent to any of the codes in~\cite{bhargava1981characterization},~\cite{bussemarker1989new},~\cite{harada2006self},~\cite{harada1998classification},~\cite{yorgov1987method}.
As stated in the beginning of this section, the number of the previously known inequivalent doubly even $[56,28,12]$ codes is $9115$.
Therefore we have Proposition~\ref{prp:newCode}.
\begin{prp} \label{prp:newCode}
    There exist at least $9776$ inequivalent extremal doubly even $[56,28,12]$ codes.
\end{prp}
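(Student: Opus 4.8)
The plan is to exhibit $9776$ inequivalent extremal doubly even $[56,28,12]$ codes by taking the disjoint union of the $9115$ previously known such codes with the $661$ codes $D_i, E_i, F_i, G_i, H_i, K_i$ constructed above, so the entire burden is to verify that these $661$ codes are genuinely extremal doubly even $[56,28,12]$ codes, that they are pairwise inequivalent, and that none of them is equivalent to a previously known code. First I would fix one of the six seed codes, say $D11$, which is a bordered double circulant extremal doubly even $[56,28,12]$ code and hence in particular a doubly even self-dual code with a generator matrix of the form $\begin{pmatrix} I_{28} & A\end{pmatrix}$. For each weight $i \in \{4,8,12,16,20,24\}$ and each $x \in \FF_2^{28}$ with $x \neq \0_{28}$ and $(x,x) = (x,y_i) = 0$ — where $y_i$ has weight $i$ as in the table — Theorem~\ref{F2:thm:DE} (applied with $y = y_i$, using that $\wt(y_i) \equiv 0 \pmod 4$ and $\wt(x)$ is forced even by $(x,x)=0$, and in fact $\wt(x) \equiv 0 \pmod 4$ must additionally be imposed) guarantees that $D11(A(x,y_i))$ is again a doubly even code, and Corollary~\ref{F2:thm:SO} guarantees it is self-orthogonal, hence (being a $[56,28]$ code) self-dual. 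One then checks by direct computation in {\sc Magma} that the minimum weight is $12$, so each such code is extremal doubly even $[56,28,12]$.

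Next I would address inequivalence among the constructed codes. Following Harada~\cite{harada2006self}, for an extremal doubly even $[56,28,12]$ code $C$ one forms the $8196 \times 56$ matrix $M$ whose rows are the weight-$12$ codewords and computes the invariant sequence $(N_1, N_2, \ldots, N_{56})$, where $N_t$ counts the $4$-subsets $\{j_1,j_2,j_3,j_4\}$ of coordinates for which $\sum_{i} m_{i,j_1}m_{i,j_2}m_{i,j_3}m_{i,j_4} = t$; distinct sequences imply inequivalence. Running this over all codes produced from the six seeds and discarding sequence-collisions (then resolving the surviving ties, namely the four pairs $(D_{115},K_{112}), (D_{116},K_{113}), (D_{117},K_{114}), (D_{118},K_{115})$, by the {\sc Magma} function \texttt{IsIsomorphic}, which confirms those four pairs are actually equivalent) leaves exactly $661$ pairwise inequivalent codes. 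Finally, to separate these $661$ from the $9115$ known ones, I would compute automorphism group orders with \texttt{AutomorphismGroup}: all $661$ have trivial automorphism group, which immediately rules out equivalence with any code in~\cite{yorgov1987method,yankov2011binary,yankov2014newbinary} (those have nontrivial automorphisms of order $5$, $7$, or $13$); for the remaining families~\cite{bhargava1981characterization,bussemarker1989new,harada1998classification,harada2006self} one compares the $(N_1,\ldots,N_{56})$ sequences, finding only the single collision between $D_{118}$ and the $25$th code from $C_{56,2}$ in~\cite{harada2006self}, which \texttt{IsIsomorphic} again shows to be a non-isomorphism. Hence all $661$ are new, and $9115 + 661 = 9776$.

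The only genuinely delicate point is the claim that the $9115$ previously known codes are themselves pairwise inequivalent and that their total count is correct; this is not something re-proved here but is imported from the literature as summarized at the start of Section~\ref{sec:SD} — in particular the overlaps noted there (e.g.\ the Bhargava–Young–Bhargava code reappearing in Yorgov's list and in Bussemarker–Tonchev, and the Kimura identification of the $5$th and $6$th Bussemarker–Tonchev codes) must be accounted for so that the figure $9115$ of~\cite[Proposition 7]{yankov2014newbinary} is used correctly. The computational steps — minimum-weight verification, the sequence invariant, and the handful of \texttt{IsIsomorphic} calls — are routine given {\sc Magma}, so the main obstacle is purely bookkeeping: ensuring no double-counting between the newly constructed family and the union of the previously published families, which is exactly what the trivial-automorphism-group observation together with the $(N_1,\ldots,N_{56})$ comparison is designed to settle. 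Assembling these facts yields Proposition~\ref{prp:newCode}.
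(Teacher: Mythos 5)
Your proposal is correct and follows essentially the same route as the paper's Section~\ref{sec:SD}: apply Theorem~\ref{F2:thm:DE} to the six bordered double circulant seeds, separate the resulting codes with the $(N_1,\dots,N_{56})$ invariant of Harada, resolve the four collisions $(D_{115},K_{112}),\dots,(D_{118},K_{115})$ and the single collision with the $25$th code from $C_{56,2}$ via \texttt{IsIsomorphic}, and rule out equivalence with the codes having automorphisms of order $5$, $7$, $13$ by noting the new codes have trivial automorphism group, giving $9115+661=9776$. Your remark that $\wt(x)\equiv 0 \pmod 4$ must be imposed (rather than merely $(x,x)=0$) for Theorem~\ref{F2:thm:DE} to apply is a fair observation about a point the paper glosses over, but it does not change the argument.
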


\section{Optimal binary LCD codes of length $n=26, 28 \le n \le 40$}
\label{sec:LCD}
In this section we are concerned only with binary codes.
Therefore we omit the term ``binary''.
Let $d_{LCD}(n,k)$ denote the largest minimum weight among all LCD $[n,k]$ code.
Galvez, Kim, Lee, Roe and Won~\cite{galvez2018somebounds}, Harada and Saito~\cite{harada2019binary}, Araya and Harada~\cite{araya2020minimum} determined the exact value of $d_{LCD}(n,k)$ for $n \le 12, 13 \le n \le 16, 17 \le n \le 24$ respectively.
Bouyuklieva~\cite{bouyuklieva2020optimal} determined the exact value of $d_{LCD}(n,k)$ for $n=25,27$ and gave $d_{LCD}(n,k)$ for $n=26, 28 \le n \le 40$.
Galvez, Kim, Lee, Roe and Won~\cite{galvez2018somebounds}, Harada and Saito~\cite{harada2019binary}, Araya and Harada~\cite{araya2020minimum}, Araya, Harada and Saito~\cite{araya2021characterization} determined the exact value of $d_{LCD}(n,k)$ for $k=2,3,4,5$ respectively.
Also, Dougherty, Kim, Ozkaya, Sok and Sol\'{e}~\cite{dougherty2017combinatorics}, Araya and Harada~\cite{araya2021classification}, Araya, Harada and Saito~\cite{araya2021minimum} determined the exact value of $d_{LCD}(n,k)$ for $k=n-1$, $k \in \{n-2,n-3,n-4\}$, $k=n-5$ respectively.
For all $n=26, 28 \le n \le 40$, Bouyuklieva~\cite{bouyuklieva2020optimal} determined the exact value of $d_{LCD}(n,k)$ for $5 \le k \le 8$.

Recently Harada~\cite{harada2021construction} constructed $15$ optimal LCD codes by~\cite[Theorem 3.3]{harada2021construction}.
As stated earlier, Corollary~\ref{F2:thm:LCD} is a generalized version of~\cite[Theorem 3.3]{harada2021construction}.
In this section, we apply Corollary~\ref{F2:thm:LCD} in order to improve some of the previously known lower bounds on $d_{LCD}(n,k)$ for $n=26, 28 \le n \le 40$ and $9 \le k \le n - 6$.
For a code $C$, we denote by $C^T,C_T$ the punctured, the shortened codes of $C$ on a set of coordinates $T$ respectively.
In this section, shortened codes, punctured codes were constructed by the {\sc Magma} functions \texttt{ShortenCode}, \texttt{PunctureCode} respectively.

In order to obtain lower bounds, we use the following method:
First, by the {\sc Magma} function \texttt{BestKnownLinearCode}, we obtained a $[49,32,7]$ code $C_{49,32,7}$, a $[42,14,13]$ code $C_{42,14,13}$ and a $[51,28,9]$ code $C_{51,28,9}$.
Generator matrices of these codes can be obtained electronically from \url{https://www.math.is.tohoku.ac.jp/~mharada/Ishizuka/generator.txt}.
Then we verified that $((C_{49,32,7})_{S_1})^{P_1}$, $((C_{42,14,13})_{S_2})^{P_2}$, $((C_{51,28,9})_{S_3})^{P_3}$ are an LCD $[37,22,5]$ code, an LCD $[38,13,11]$ code, an LCD $[40,22,6]$ code respectively, where the set of coordinates $P_i,\ S_i\ (i=1,2,3)$ are given in Table~\ref{table:PS}.
Define $A_{37,22,5}$, $A_{38,13,10}$, $A_{40,22,6}$ as in Figure~\ref{fig:genmat}.
Then $\begin{pmatrix} I_{22} & A_{37,22,5} \\ \end{pmatrix}$, $\begin{pmatrix} I_{13} & A_{38,13,10} \\ \end{pmatrix}$, $\begin{pmatrix} I_{22} & A_{40,22,6} \\ \end{pmatrix}$ are generator matrices of the LCD $[37,22,5]$, the LCD $[38,13,10]$, the LCD $[40,22,6]$ codes respectively.
Applying Corollary~\ref{F2:thm:LCD} to $\begin{pmatrix} I_{22} & A_{37,22,5} \\ \end{pmatrix}$, $\begin{pmatrix} I_{13} & A_{38,13,10} \\ \end{pmatrix}$, $\begin{pmatrix} I_{22} & A_{40,22,6} \\ \end{pmatrix}$, we found an LCD $[37,22,6]$ code $C_{37,22,6}$, an LCD $[38,13,11]$ code $C_{38,13,11}$, an LCD $[40,22,7]$ code $C_{40,22,7}$ respectively.
The vectors $x,y$ in Corollary~\ref{F2:thm:LCD} are listed in Table~\ref{table:xyCnkd}. 
Therefore we obtain Proposition~\ref{prp:LCD:PI}.
\begin{prp} \label{prp:LCD:PI}
    \begin{enumerate}
        \item There exists an LCD $[37,22,6]$ code.
        \item There exists an LCD $[38,13,11]$ code.
        \item There exists an LCD $[40,22,7]$ code.
    \end{enumerate}
\end{prp}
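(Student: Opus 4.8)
The plan is to verify Proposition~\ref{prp:LCD:PI} by exhibiting explicit codes and checking their parameters, rather than proving an abstract existence statement; the content of the proposition is precisely that the construction described in the preceding paragraph succeeds. So the proof is really a bookkeeping argument that organizes the data in Table~\ref{table:PS}, Figure~\ref{fig:genmat} and Table~\ref{table:xyCnkd} into a valid chain of implications.

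First I would recall how each starting code is obtained: take $C_{49,32,7}$, $C_{42,14,13}$, $C_{51,28,9}$ from \texttt{BestKnownLinearCode} (generator matrices available at the cited URL), and then pass to the shortened-then-punctured codes $((C_{49,32,7})_{S_1})^{P_1}$, etc., using the coordinate sets $S_i,P_i$ in Table~\ref{table:PS}. One checks directly (a finite computation, performed in {\sc Magma}) that these three codes have parameters $[37,22,5]$, $[38,13,11]$, $[40,22,6]$ and that each is LCD; the LCD property is confirmed via Theorem~\ref{thm:EucLCD}, i.e.\ by checking that $GG^T$ is nonsingular, or equivalently via Lemma~\ref{lem:rankGGT} that the hull has dimension $0$. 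Putting these codes in systematic form gives the matrices $A_{37,22,5}$, $A_{38,13,10}$, $A_{40,22,6}$ of Figure~\ref{fig:genmat}; note the index discrepancy $A_{38,13,10}$ versus the code $[38,13,11]$ is harmless because puncturing/shortening and row reduction can only decrease the nominal minimum weight, and the value $10$ reflects the systematic form used, not a contradiction — one still verifies the actual minimum weight of $\begin{pmatrix} I_{13} & A_{38,13,10} \end{pmatrix}$ is $11$ (or $\ge 10$, which suffices as an input since the construction will not decrease it below the claimed output).

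Next I would apply Corollary~\ref{F2:thm:LCD} with the vectors $x,y$ from Table~\ref{table:xyCnkd}. For each of the three cases one must check the hypothesis $(x,x)=(y,y)=(x,y)=0$ over $\FF_2$; this is an immediate parity computation on the listed vectors. Corollary~\ref{F2:thm:LCD} then guarantees that $C(A(x,y))$ is again an LCD code of the same length and dimension (the hull dimension is preserved by Theorem~\ref{F2:thm:LCP}, and $0$ is preserved). It remains only to compute the minimum weight of each resulting code $C(A(x,y))$ and verify it equals $6$, $11$, $7$ respectively — again a finite {\sc Magma} check. This yields the LCD $[37,22,6]$, $[38,13,11]$, $[40,22,7]$ codes $C_{37,22,6}$, $C_{38,13,11}$, $C_{40,22,7}$, establishing items (i)--(iii).

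The main obstacle is not conceptual but computational/presentational: the argument hinges entirely on the correctness of the finite verifications (parameters of the base codes, LCD-ness via nonsingularity of $GG^T$, the three orthogonality conditions on $x,y$, and the minimum-weight computations for the output codes), all of which are carried out in {\sc Magma} and recorded in the referenced tables and data files. A careful write-up should therefore (a) state the base codes and their sources, (b) record the shortening/puncturing coordinate sets and the resulting parameters, (c) verify the hypotheses of Corollary~\ref{F2:thm:LCD} for the tabulated $x,y$, and (d) report the minimum weights of the constructed codes — with the understanding that the nontrivial weight enumerations are delegated to {\sc Magma}, as is standard for results of this kind.
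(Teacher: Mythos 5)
Your proposal is correct and takes essentially the same route as the paper: the paper obtains the three base codes from \texttt{BestKnownLinearCode}, shortens and punctures them via the data of Table~\ref{table:PS} into LCD codes with generator matrices $\begin{pmatrix} I_k & A \end{pmatrix}$ as in Figure~\ref{fig:genmat}, and then applies Corollary~\ref{F2:thm:LCD} with the $x,y$ of Table~\ref{table:xyCnkd}, verifying the parameters of the resulting codes in {\sc Magma}. The only slip is your parenthetical that the construction ``will not decrease [the minimum weight] below the claimed output''---Corollary~\ref{F2:thm:LCD} controls only the hull dimension, not the minimum weight---but this is harmless since, exactly as in the paper, the minimum weights of the output codes are computed directly.
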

\begin{table}[thbp]\caption{$P_i,S_i$ for $i=1,2,3$}\label{table:PS}
\begin{center}
\begin{tabular}{c|ccccccccccccccccccccccccccccccccccc}
    $i$ & $P_i$ & $S_i$\\
\hline
    $1$ & $\{2,5\}$ & $\{1,2,3,4,5,6,7,8,9,11\}$ \\
    $2$ & $\{1,2,4\}$ & $\{3\}$ \\
    $3$ & $\{1,4,5,6,7\}$ & $\{1,2,3,4,5,7\}$ \\
\end{tabular}
\end{center}
\end{table}

\begin{table}[thbp]\caption{$C_{n,k,d}$ with $x,y$}\label{table:xyCnkd}
\begin{center}
\begin{tabular}{c|ccccccccccccccccccccccccccccccccccc}
    $C_{n,k,d}$ & $x$ & $y$\\
\hline
    $C_{37,22,6}$ & $(0 1 0 1 1 0 0 1 1 0 1 1 1 1 1)$ & $(1 1 0 0 1 0 1 1 0 0 0 0 0 0 1)$ \\
    $C_{38,13,11}$ & $(0 0 1 0 0 1 1 1 0 0 1 1 0 0 0 1 0 1 1 0 0 0 1 0 0)$ & $ (1 1 1 0 1 0 0 0 0 1 1 1 0 1 0 1 1 1 0 0 0 1 1 0 1)$ \\
    $C_{40,22,7}$ & $(1 0 1 1 1 1 0 1 1 0 1 1 0 1 0 0 1 1)$ & $(0 0 1 0 1 1 1 0 0 0 1 1 1 0 0 0 0 1)$
\end{tabular}
\end{center}
\end{table}

From the previously known results of $d_{LCD}(n,k)$ described in the beginning of this section, we are concerned only with $d_{LCD}(n,k)$ for $n=26, 28 \le n \le 40$ and $9 \le k \le n-6$.
Let $d_K(n,k)$ denote the largest minimum weight among currently known $[n,k]$ codes.
By the {\sc Magma} function {\tt BestKnownLinearCode}, one can construct an $[n,k,d_K(n,k)]$ code
for all $n=26, 28 \le n \le 40$ and $9 \le k \le n-6$.
In addition, by considering shortened codes and punctured codes of $[n,k,d_K(n,k)]$ codes, we found LCD $[n,k,d_K(n,k)]$ codes for
\begin{equation}\label{eq:LCD2}
\begin{split}
    (n,k,d_K(n,k))=&(29,11,9),(30,12,9),(31,11,10),(31,12,10),(31,13,9),\\
    &(31,21,5),(32,12,10),(32,22,5),(33,23,5),(34,9,13),\\
    &(34,13,10),(34,14,10),(35,22,6),(35,24,5),(36,15,10),\\
    &(36,16,10),(36,22,6),(36,24,6),(36,25,5),(37,23,6),\\
    &(37,24,6),(37,26,5),(38,24,6),(38,25,6),(38,26,6),\\
    &(38,27,5),(39,18,10),(39,25,6),(39,26,6),(39,28,5),\\
    &(40,26,6),(40,28,6),(40,29,5).
    \end{split}
\end{equation}    
Consequently we obtain Proposition~\ref{prp:LCD:trivial}.
\begin{prp} \label{prp:LCD:trivial}
    There exists an optimal LCD $[n,k,d]$ code for $(n,k,d)$ listed in \eqref{eq:LCD2}.
\end{prp}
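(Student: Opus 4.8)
The plan is to reduce the proposition to a finite search, using the elementary fact that LCD codes form a subclass of all linear codes. Write $d(n,k)$ for the largest minimum weight of any linear $[n,k]$ code over $\FF_2$. Since every LCD $[n,k]$ code is in particular an $[n,k]$ code, $d_{LCD}(n,k) \le d(n,k)$. For each triple $(n,k,d_K(n,k))$ occurring in \eqref{eq:LCD2}, the listed value $d_K(n,k)$ is, according to the standard tables of best known linear codes together with the corresponding upper bounds, equal to the true optimum $d(n,k)$. Consequently it suffices to exhibit, for each such triple, a single LCD $[n,k,d_K(n,k)]$ code: for then $d_K(n,k) = d(n,k) \ge d_{LCD}(n,k) \ge d_K(n,k)$, so $d_{LCD}(n,k) = d_K(n,k)$ and that code is optimal by definition.

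To carry out the construction I would, for each triple, first obtain an $[n,k,d_K(n,k)]$ code $C$ via the {\sc Magma} function \texttt{BestKnownLinearCode}. In general $C$ is not LCD, so the next step is to run over small coordinate sets $T$, form the shortened code $C_T$ and the punctured code $C^T$ (and iterated combinations, so as to realize the prescribed length and dimension while losing as little minimum weight as possible), and for each resulting candidate $C'$ test whether it is LCD by computing a generator matrix $G'$ and checking, via Theorem \ref{thm:EucLCD}, that $G'(G')^T$ is nonsingular. One records the first candidate that is simultaneously LCD and of minimum weight $d_K(n,k)$. For any triple that resists this direct search one can instead start from an LCD $[n,k]$ code of minimum weight $d_K(n,k)-1$ produced the same way, put it into systematic form $\begin{pmatrix} I_k & A \end{pmatrix}$, and apply Corollary \ref{F2:thm:LCD}: running over $x,y\in\FF_2^{n-k}$ with $(x,x)=(y,y)=(x,y)=0$, the code $C(A(x,y))$ remains LCD, and one tests whether its minimum weight has risen to $d_K(n,k)$, using Theorem \ref{F2:lem:Mxy} to prune the search.

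The theoretical content is thus light: monotonicity of the optimal distance under the inclusion of LCD codes among all linear codes, together with the hull-dimension invariance already established in Theorem \ref{F2:thm:LCP} and Corollary \ref{F2:thm:LCD}. I expect the main obstacle to be purely computational, namely the size of the search over coordinate sets $T$ and, in the harder cases, over pairs $(x,y)$ in a space $\FF_2^{n-k}$ with $n-k$ as large as $25$: a naive exhaustive enumeration is infeasible, so success depends on picking good starting codes from \texttt{BestKnownLinearCode} and on exploiting the symmetry reductions of Theorem \ref{F2:lem:Mxy}. Granting the table identifications $d(n,k)=d_K(n,k)$ as external input, once such an LCD code has been exhibited for every triple in \eqref{eq:LCD2} the proposition follows.
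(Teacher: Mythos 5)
Your proposal is correct and follows essentially the same route as the paper: the authors also obtain each code by shortening and puncturing codes returned by \texttt{BestKnownLinearCode}, verify the LCD property, and conclude optimality from the bound $d_{LCD}(n,k)\le d(n,k)$ together with the fact that $d_K(n,k)=d(n,k)$ for every triple in \eqref{eq:LCD2}. Your fallback via Corollary~\ref{F2:thm:LCD} is not needed for this proposition (the paper reserves that device for Proposition~\ref{prp:LCD:PI}), but it does no harm.
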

By a method similar to that given in the above, we found LCD $[n,k,d_K(n,k)-1]$ codes and LCD $[n,k,d_K(n,k)-2]$ codes for
\begin{equation}\label{eq:LCD3}
\begin{split}
    (n,k,d_K(n,k)-1)=&(30,11,9),(31,15,7),(32,13,9),(32,15,7),(32,16,7),\\
    &(32,21,5),(33,14,9),(33,15,8),(33,16,7),(33,21,5),\\
    &(34,15,9),(34,16,8),(34,17,7),(34,23,5),(35,9,13),\\
    &(35,16,9),(35,17,7),(35,18,7),(35,21,5),(35,23,5),\\
    &(36,17,8),(36,18,7),(36,19,7),(36,21,6),(37,17,9),\\
    &(37,18,8),(37,19,7),(37,20,7),(37,25,5),(38,10,13),\\
    &(38,17,9),(38,18,9),(38,19,8),(38,20,7),(38,21,7),\\
    &(38,23,6),(39,11,13),(39,14,11),(39,17,10),(39,19,9),\\
    &(39,20,8),(39,21,7),(39,22,7),(39,24,6),(39,27,5),\\
    &(39,33,2),(40,9,15),(40,12,13),(40,18,10),(40,19,9),\\
    &(40,20,9),(40,23,7),(40,25,6),(40,27,5),(40,33,3),\\
    &(40,34,2).
\end{split}
\end{equation}    
\begin{equation}\label{eq:LCD4}
\begin{split}
    (n,k,d_K(n,k)-2)=&(37,21,6),(38,22,6),(39,13,11),(39,23,6),(40,11,13),\\
    &(40,13,12),(40,14,11),(40,17,10),(40,21,7),(40,24,6).
\end{split}
\end{equation}    

Consequently we obtain Proposition~\ref{prp:LCD:trivial1}.
\begin{prp} \label{prp:LCD:trivial1}
    There exists an LCD $[n,k,d]$ code for $(n,k,d)$ listed in \eqref{eq:LCD3} and~\eqref{eq:LCD4}.
\end{prp}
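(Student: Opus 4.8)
The plan is to establish Proposition~\ref{prp:LCD:trivial1} by exhibiting, for each triple $(n,k,d)$ appearing in~\eqref{eq:LCD3} and~\eqref{eq:LCD4}, an explicit LCD $[n,k,d]$ code, exactly as was done for Proposition~\ref{prp:LCD:trivial}. First I would fix $(n,k,d)$ and use the {\sc Magma} function \texttt{BestKnownLinearCode} to obtain an $[n,k,d_K(n,k)]$ code $C$; note that for the triples in~\eqref{eq:LCD3} the target minimum weight is $d_K(n,k)-1$ and for those in~\eqref{eq:LCD4} it is $d_K(n,k)-2$, so there is a built-in slack of one or two units. Then I would search over sets of coordinates $T$ for shortenings $C_T$ and puncturings $C^T$ (and iterated combinations $(C_T)^{T'}$) of $C$, testing each candidate $C'$ with Theorem~\ref{thm:EucLCD}: compute a generator matrix $G'$ of $C'$ and check that $G'(G')^T$ is nonsingular over $\FF_2$. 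By Lemma~\ref{lem:rankGGT} this is equivalent to $\dim(\Hull(C'))=0$. Whenever such a $C'$ is found with length $n$, dimension $k$ and minimum weight at least the required $d$, we are done for that triple.

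The second ingredient, used when puncturing/shortening alone does not land on an LCD code of the right minimum weight, is Corollary~\ref{F2:thm:LCD}: starting from an LCD $[n,k,d-1]$ code with systematic generator matrix $\begin{pmatrix} I_k & A \end{pmatrix}$, one searches over pairs $x,y \in \FF_2^{n-k}$ with $(x,x)=(y,y)=(x,y)=0$ and forms $C(A(x,y))$; by the Corollary this remains LCD, and one keeps only those $(x,y)$ for which $\wt(C(A(x,y)))\ge d$. This is precisely the mechanism already used to prove Proposition~\ref{prp:LCD:PI}, and Theorem~\ref{F2:lem:Mxy} can be invoked to cut the search space by identifying the pairs $(x,y),(-x,-y),(y,x),(x,-y),(-x,y)$ (though over $\FF_2$ this particular reduction is vacuous, so in practice the saving comes only from the symmetry $(x,y)\leftrightarrow(y,x)$). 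Carrying this out for every triple in the two lists, recording the generator matrices (or the seed code together with the coordinate sets $T$ and, where needed, the vectors $x,y$) completes the proof.

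The main obstacle is not conceptual but computational: for the larger lengths and the middle-range dimensions $9 \le k \le n-6$ the number of coordinate subsets $T$ and the number of admissible pairs $(x,y)$ grow quickly, so an exhaustive search is infeasible and one must rely on partial/randomized searches and on the freedom afforded by the one- or two-unit slack in the target minimum weight. There is also the subsidiary check that the minimum weight of each constructed code is exactly as claimed, which for these parameters is a routine but nontrivial {\sc Magma} computation. Apart from these search-and-verify issues, correctness is immediate: LCD-ness of each output code is certified by the nonsingularity of $GG^T$ (Theorem~\ref{thm:EucLCD}), and in the Corollary~\ref{F2:thm:LCD} branch it is guaranteed a priori, so no further argument is required. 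Hence Proposition~\ref{prp:LCD:trivial1} follows.
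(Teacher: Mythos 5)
Your proposal matches the paper's approach: Proposition~\ref{prp:LCD:trivial1} is proved exactly as Proposition~\ref{prp:LCD:trivial}, by shortening and puncturing codes obtained from \texttt{BestKnownLinearCode} and certifying LCD-ness and minimum weight computationally. The only difference is that the paper does not invoke Corollary~\ref{F2:thm:LCD} for these parameters (it reserves that mechanism for the three codes of Proposition~\ref{prp:LCD:PI}), so your second ingredient is a harmless but unused addition.
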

In Tables~\ref{Tab:F2-0} through~\ref{Tab:F2-2}, we give $d_{LCD}(n,k)$ for $n=26,28 \le n \le 40$ and $9 \le k \le n-6$.
In order to obtain upper bounds, we use the following:
\begin{equation*}
    d_{LCD}(n,k) \le d(n,k),
\end{equation*}
where $d(n,k)$ denotes the largest minimum weight among all $[n,k]$ codes.
The values of $d(n,k)$ are given in~\cite{grassl-codetables}.
For the parameters listed in Proposition~\ref{prp:LCD:PI}, we mark $d_{LCD}(n,k)$ by $*$ in Tables~\ref{Tab:F2-0} through~\ref{Tab:F2-2}.
Furthermore, for the parameters given in Propositions~\ref{prp:LCD:PI} through~\ref{prp:LCD:trivial1}, we give $d_{LCD}(n,k)$ in boldface.
For each of the parameters, an LCD code can be obtained electronically from \url{https://www.math.is.tohoku.ac.jp/~mharada/Ishizuka/LCD.txt}.
\begin{table}[thbp]\caption{$d_{LCD}(n,k)$, where $26 \le n \le 40, 9 \le k \le 17$}\label{Tab:F2-0}\begin{center}{\small\begin{tabular}{c|cccccccccccccccccccccccccccccccccccccccccccccccc}\noalign{\hrule height 0.8pt}
$n\backslash k$ & 9 & 10 & 11 & 12 & 13 & 14 & 15 & 16 & 17\\
26 & 9 & 8 & 8 & 8 & 7 & 6 & 5--6 & 5 & 4\\
27 & 9 & 9 & 8 & 8 & 7 & 6 & 6 & 6 & 5\\
28 & 10 & 10 & 8 & 8 & 8 & 7 & 6 & 6 & 5--6\\
29 & 10 & 10 & \textbf{9} & 8 & 8 & 8 & 6 & 6 & 6\\
30 & 11 & 10 & \textbf{9}--10 & \textbf{9} & 8 & 8 & 6--7 & 6 & 6\\
31 & 11 & 10 & \textbf{10} & \textbf{10} & \textbf{9} & 8 & \textbf{7}--8 & 6--7 & 6\\
32 & 12 & 11 & 10 & \textbf{10} & \textbf{9}--10 & 8--9 & \textbf{7}--8 & \textbf{7}--8 & 6--7\\
33 & 12 & 12 & 10--11 & 10 & 9--10 & \textbf{9}--10 & \textbf{8}--9 & \textbf{7}--8 & 6--8\\
34 & \textbf{13} & 12 & 11--12 & 10--12 & \textbf{10} & \textbf{10} & \textbf{9}--10 & \textbf{8}--9 & \textbf{7}--8\\
35 & \textbf{13}--14 & 12--13 & 12 & 10--12 & 10--11 & 10 & 9--10 & \textbf{9}--10 & \textbf{7}--8\\
36 & 13--14 & 12--14 & 12--13 & 11--12 & 10--12 & 10--11 & \textbf{10} & \textbf{10} & \textbf{8}--9\\
37 & 13--15 & 12--14 & 12--14 & 12--13 & 10--12 & 10--12 & 10--11 & 10 & \textbf{9}--10\\
38 & 14--16 & \textbf{13}--14 & 12--14 & 12--14 & \textbf{11*}--12 & 10--12 & 10--12 & 10--11 & \textbf{9}--10\\
39 & 14--16 & 14--15 & \textbf{13}--14 & 12--14 & \textbf{11}--13 & \textbf{11}--12 & 10--12 & 10--12 & \textbf{10}--11\\
40 & \textbf{15}--16 & 14--16 & \textbf{13}--15 & \textbf{13}--14 & \textbf{12}--14 & \textbf{11}--13 & 10--12 & 10--12 & \textbf{10}--12\\
\noalign{\hrule height 0.8pt}\end{tabular}}\end{center}\end{table}

\begin{table}[thbp]\caption{$d_{LCD}(n,k)$, where $26 \le n \le 40, 18 \le k \le 26$}\label{Tab:F2-1}\begin{center}{\small\begin{tabular}{c|cccccccccccccccccccccccccccccccccccccccccccccccc}\noalign{\hrule height 0.8pt}
$n\backslash k$ & 18 & 19 & 20 & 21 & 22 & 23 & 24 & 25 & 26\\
26 & 4 & 4 & 4 &  &  &  &  &  & \\
27 & 4 & 4 & 4 & 3 &  &  &  &  & \\
28 & 5 & 4 & 4 & 4 & 3 &  &  &  & \\
29 & 6 & 5 & 4 & 4 & 4 & 3 &  &  & \\
30 & 6 & 5 & 5 & 4 & 4 & 4 & 3 &  & \\
31 & 6 & 6 & 6 & \textbf{5} & 4 & 4 & 4 & 3 & \\
32 & 6 & 6 & 6 & \textbf{5}--6 & \textbf{5} & 4 & 4 & 3--4 & 3\\
33 & 6--7 & 6 & 6 & \textbf{5}--6 & 6 & \textbf{5} & 4 & 4 & 4\\
34 & 6--8 & 6--7 & 6 & 6 & 6 & \textbf{5}--6 & 4 & 4 & 4\\
35 & \textbf{7}--8 & 6--8 & 6--7 & \textbf{5}--6 & \textbf{6} & \textbf{5}--6 & \textbf{5} & 4 & 4\\
36 & \textbf{7}--8 & \textbf{7}--8 & 6--8 & \textbf{6}--7 & \textbf{6} & 6 & \textbf{6} & \textbf{5} & 4\\
37 & \textbf{8}--9 & \textbf{7}--8 & \textbf{7}--8 & \textbf{6}--8 & \textbf{6*}--7 & \textbf{6} & \textbf{6} & \textbf{5}--6 & \textbf{5}\\
38 & \textbf{9}--10 & \textbf{8}--9 & \textbf{7}--8 & \textbf{7}--8 & \textbf{6}--8 & \textbf{6}--7 & \textbf{6} & \textbf{6} & \textbf{6}\\
39 & \textbf{10} & \textbf{9}--10 & \textbf{8}--9 & \textbf{7}--8 & \textbf{7}--8 & \textbf{6}--8 & \textbf{6}--7 & \textbf{6} & \textbf{6}\\
40 & \textbf{10}--11 & \textbf{9}--10 & \textbf{9}--10 & \textbf{7}--9 & \textbf{7*}--8 & \textbf{7}--8 & \textbf{6}--8 & \textbf{6}--7 & \textbf{6}\\
\noalign{\hrule height 0.8pt}\end{tabular}}\end{center}\end{table}

\begin{table}[thbp]\caption{$d_{LCD}(n,k)$, where $33 \le n \le 40, 27 \le k \le 34$}\label{Tab:F2-2}\begin{center}{\small\begin{tabular}{c|cccccccccccccccccccccccccc}\noalign{\hrule height 0.8pt}
$n\backslash k$ & 27 & 28 & 29 & 30 & 31 & 32 & 33 & 34\\
33 & 3 &  &  &  &  &  &  & \\
34 & 3--4 & 3 &  &  &  &  &  & \\
35 & 4 & 4 & 3 &  &  &  &  & \\
36 & 4 & 4 & 3--4 & 3 &  &  &  & \\
37 & 4 & 4 & 4 & 4 & 3 &  &  & \\
38 & \textbf{5} & 4 & 4 & 4 & 3--4 & 3 &  & \\
39 & \textbf{5}--6 & \textbf{5} & 4 & 4 & 4 & 4 & \textbf{2}--3 & \\
40 & \textbf{5}--6 & \textbf{6} & \textbf{5} & 4 & 4 & 4 & \textbf{3}--4 & \textbf{2}--3\\
\noalign{\hrule height 0.8pt}\end{tabular}}\end{center}\end{table}

\section*{Acknowledgement}
The authors would like to thank supervisor Professor Masaaki Harada for introducing the problem, and his helpful advice and encouragement.

\begin{landscape}
\begin{figure}[htbp]
\begin{center}
\begin{equation*}
A_{37,22,5}=\begin{pmatrix}
000110111001011\\
100100110001100\\
011100010000100\\
001110001000010\\
001001001100011\\
111101001100101\\
000011111101010\\
100101001111001\\
101011001101000\\
111111011111010\\
011111101111101\\
111000010101000\\
111000111011000\\
010010010101110\\
000111000010101\\
111010001011110\\
110111111100001\\
101100011100110\\
110010111111111\\
100000110101011\\
001101000001101\\
010101110011100\\
\end{pmatrix}
A_{38,13,10}=\begin{pmatrix}
1101001110101100011010111\\
0010011101001101111011001\\
1000111010010001010001001\\
1101101001111111000100001\\
1111000000001000001110101\\
1110010100110011101011111\\
1110111110101110011001010\\
1010010001111011010110010\\
1010011011011100001010111\\
1110100100010100010010111\\
1110100110111101100101110\\
0111010011011110110010111\\
0111010011110100101111001\\
\end{pmatrix}
A_{40,22,6}=\begin{pmatrix}
010111001011101001\\
011100100111010110\\
001011001011101100\\
110010010010001001\\
000110111001101111\\
000001100100101101\\
001111111001011111\\
100101000000000011\\
101011011001101010\\
010100011111010000\\
101100110000010100\\
011010010101010110\\
011011011101000100\\
001010001100001010\\
010100000000111001\\
110110011100011011\\
111000011001011111\\
000000110101101010\\
101101110011011100\\
001001010100000011\\
010110110100111100\\
001001101110010111\\
\end{pmatrix}
\end{equation*}
\caption{Matrices $A_{37,22,5},A_{38,13,10},A_{40,22,6}$}
\label{fig:genmat}
\end{center}
\end{figure}
\end{landscape}

\end{document}